\def\BibTeX{{\rm B\kern-.05em{\sc i\kern-.025em b}\kern-.08em
    T\kern-.1667em\lower.7ex\hbox{E}\kern-.125emX}}
\theoremstyle{plain}
\newtheorem{example}{{Example}}
\newtheorem{theorem}{{Theorem}}
\newtheorem{lemma}{{Lemma}}
\newtheorem{definition}{{Definition}}
\newcommand{\rev}[1]{\textcolor[rgb]{0,0,0}{#1}}
\newcommand{\rrev}[1]{\textcolor[rgb]{0,0,0}{#1}}
\newcommand{\eat}[1]{}
\newcommand{\pull}{Pull-Based Paradigm}
\newcommand{\push}{Push-Based Paradigm}
\newcommand{\spc}{SPC\xspace}
\newcommand{\our}{PSPC}
\newcommand{\base}{HP-SP${\rm C_s}$}
\newcommand{\ourp}{PSC${\rm P^+}$}
\definecolor{mygray}{gray}{.9}
\definecolor{dkgreen}{rgb}{0,0.6,0}
\definecolor{gray}{rgb}{0.5,0.5,0.5}
\definecolor{mauve}{rgb}{0.58,0,0.82}
\begin{document}

%\title{Conference Paper Title*\\
\title{\our: Efficient Parallel Shortest Path Counting on Large-Scale Graphs}
% {\footnotesize \textsuperscript{*}Note: Sub-titles are not captured in Xplore and
% %should not be used}
% %\thanks{Identify applicable funding agency here. If none, delete this.}
% }

% \author{{You Peng, Jeffrey Xu Yu, Sibo Wang}

% \vspace{1.6mm}\\
% \fontsize{10}{10}
% \selectfont\itshape The Chinese University of Hong Kong\\
% \fontsize{9}{9} \selectfont\ttfamily\upshape
% \{ypeng,yu,swang\}@se.cuhk.edu.hk
% }
\author{\IEEEauthorblockN{You Peng}
\IEEEauthorblockA{%\textit{Department of Systems Engineering and Engineering Management)} \\
\textit{The Chinese University of Hong Kong}\\
Hong Kong, China \\
ypeng@se.cuhk.edu.hk}
\and
\IEEEauthorblockN{Jeffrey Xu Yu}
\IEEEauthorblockA{%\textit{Department of Systems Engineering and Engineering Management} \\
\textit{The Chinese University of Hong Kong}\\
Hong Kong, China \\
yu@se.cuhk.edu.hk}
\and
\IEEEauthorblockN{Sibo Wang}
\IEEEauthorblockA{%\textit{Department of Systems Engineering and Engineering Management} \\
\textit{The Chinese University of Hong Kong}\\
Hong Kong, China \\
swang@se.cuhk.edu.hk}
}

\maketitle

\begin{abstract}
In modern graph analytics, the shortest path is a fundamental concept. Numerous \rrev{recent works} concentrate mostly on the distance of these shortest paths. Nevertheless, in the era of betweenness analysis, the counting of the shortest path between $s$ and $t$ is equally crucial. \rrev{It} is \rev{also} an important issue in the area of graph databases. In recent years, several studies have been conducted in an effort to tackle such issues. Nonetheless, the present technique faces a considerable barrier to parallel due to the dependencies in the index construction stage, hence limiting its application possibilities and wasting the potential hardware performance. To address this problem, we provide a parallel shortest path counting method that could avoid these dependencies and obtain approximately linear index time speedup as the number of threads increases. Our empirical evaluations verify the efficiency and effectiveness.

\end{abstract}

\begin{IEEEkeywords}
Parallel, Shortest Path Counting, Graph Data Management
\end{IEEEkeywords}

\section{Introduction}
\label{sect:intro}
The shortest path problem~\cite{sibowang1,sibowang2,sibowang3,peng1,peng10,peng11,peng12,peng13,peng14} is a basic one in the field of graph analytics~\cite{10.1145/3448016.3457237,shi2022indexing,peng15}, and numerous studies have been conducted on its related topics, e.g., shortest path distance~\cite{abraham2011hub,akiba2012shortest,akiba2014dynamic,li2019scaling}, shortest path counting~\cite{zhang2020hub,oyama2004applying,ren2018shortest,botea2021counting}. Given two vertices $s$ and $t$, a shortest path between them is the one with the shortest length among all possible paths. A variety of applications, e.g., geographic navigation, Internet routing, socially tenuous group detecting~\cite{shen2017finding}, influential community searching~\cite{li2017most}, event detection~\cite{rozenshtein2014event}, betweenness centrality~\cite{brandes2001faster,puzis2007fast}, and route planning~\cite{abraham2011hub,abraham2012hierarchical}, make extensive use of it. In the aforementioned applications, distance between $s$ and $t$ is frequently taken into account when determining the vertices' importance and relevance, e.g., the nearest keyword search~\cite{jiang2015exact}, and ranking search in social networks~\cite{vieira2007efficient,peng2,peng3,peng4,peng5,peng6,peng7,peng8,peng9,peng10}.

Beside distance, the shortest path counting (\spc) is a vital aspect of the shortest path related problems. This is due to the fact that basing relevance and importance merely on distance is uninformative~\cite{zhang2020hub}. In addition, many graphs in the real world have a limited diameter owing to the small-world phenomenon. Consequently, numerous pairs of vertices have the same distance between them. Several pairings of vertices will be considered equally relevant based on the distance information alone. This is really unrealistic. A typical example is given as follows: Consider graph $H$ in Figure~\ref{fig:motivation}. Both $t_1$ and $t_2$ are located at a distance of $2$ from $s$ in $H$. Therefore, $t_1$ and $t_2$ are considered equally important to $s$ based only on their distance. However, such a conclusion is irrational, since $s$ and $t_2$ are connected by the shortest paths and hence are more relevant. In light of this, it is also essential to count the number of shortest paths between any two specified vertices. 

Even with a not-so-``small" diameter (the longest shortest path), the distance is also uninformative. Consider a connected unweighted graph $G$ with $n$ vertices and a diameter of $d$. Given a vertex $v$, it could only differentiate $d$ types of vertices if basing merely on distance. \rrev{In the most of real networks,} e.g., SNAP~\cite{snapnets}, $d \ll n$.

% Thus, in this paper, it investigates how to efficiently construct an index for non-small-diameter graphs, e.g., web networks, and some web networks, in addition to small-world networks. The findings of our experiments indicate that the state-of-the-art algorithm \base~\cite{zhang2020hub} could not cope with it efficiently. This paper also investigates how to design the algorithm with a parallel paradigm in order to make it scalable.

%It can be uninformative to base relevance solely on distance. To be precise, many graphs in the real world have a limited diameter due to the small-world phenomenon. As a result, numerous pairs of vertices are of the same distance. In such a case, based on the distance information alone, many pairs of vertices will be considered as equally relevant. This is by no means realistic. For example, consider the graph $H$ in Figure 1. In $H$, both $t_1$ and $t_2$ are at distance $2$ from $s$. Thus, based on their distance alone, $t_1$ and $t_2$ are considered equally relevant to $s$. Nevertheless, such a conclusion is counter-intuitive, since as can be observed, $s$ and $t_2$ are connected by more shortest paths, and thus are more relevant. In light of this, this paper studies the problem of counting the \# of shortest paths between any two given vertices. 

\begin{figure}[htb]
%\vspace{-1mm}
	\centering
	\includegraphics[width=0.40\linewidth]{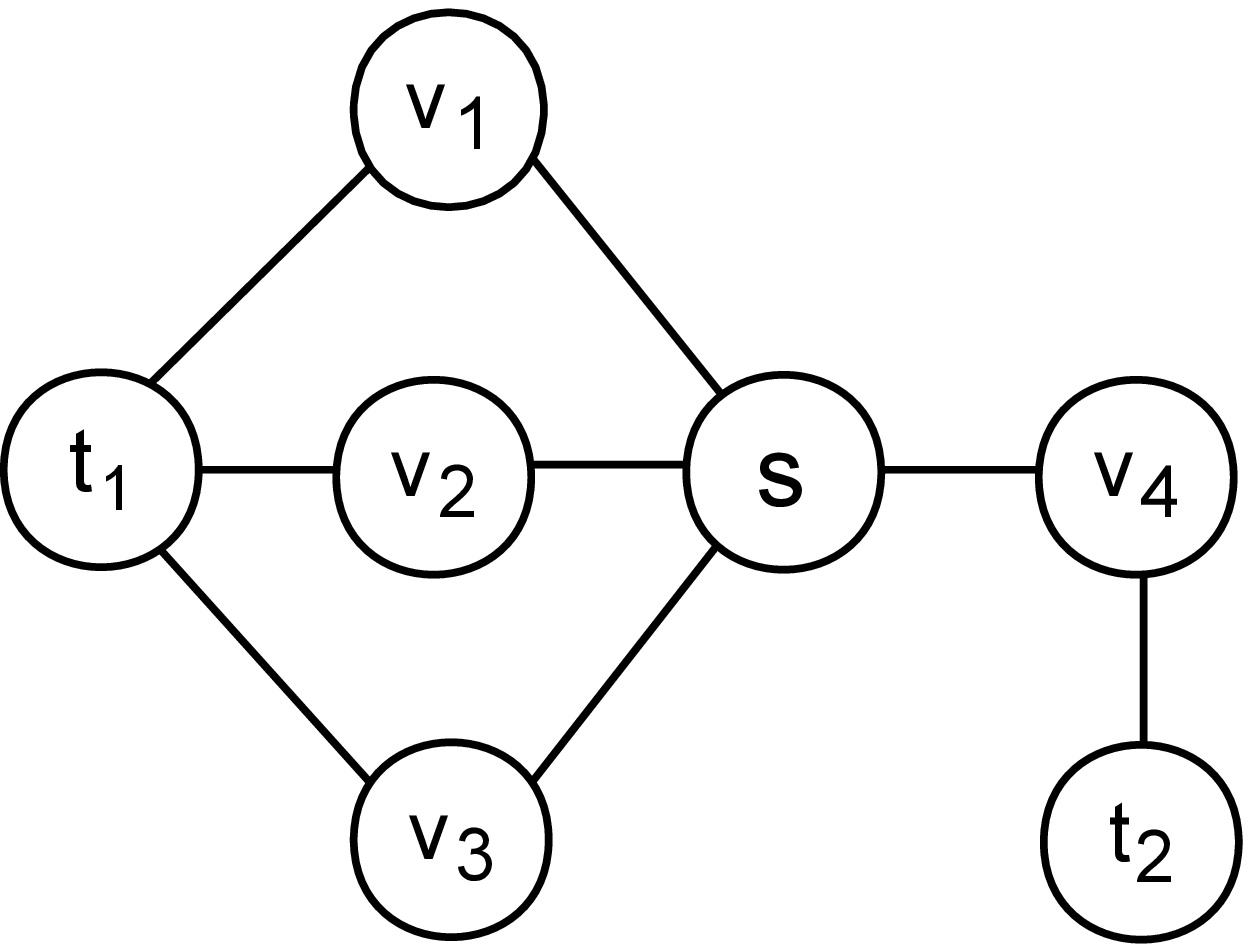}
%\vspace{-1mm}
	\caption{Graph H}
	\label{fig:motivation}
%\vspace{-1mm}
\end{figure}

%to rewrite here
\noindent \textbf{Application.} Listed below are some vital applications for the \spc problem.

\noindent(1) \underline{\textit{Group Betweenness}}.~The group betweenness exemplifies a typical application of the \spc problem. Puzis et al.~\cite{puzis2007fast} estimate the importance of a vertex set $C$ to $G$ based on the group betweenness. Let $P_{s, t}$ represent the set of shortest paths connecting vertices $s$ and $t$, $spc(s, t)$ represent the number of shortest paths connecting vertices $s$ and $t$, and $spc_C(s, t)$ indicate the $\#$ of the paths in $P_{s, t}$ via $C$. Group Betweenness of $C$, indicated by \"B$(C)$, is defined as \"B$(C) = \sum_{s, t}spc_{C}(s, t) / spc(s, t)$. As shown in~\cite{puzis2007fast}, there is an algorithm GBC for progressively evaluating \"B$(C)$. In particular, let $C = \{v_1,...,v_{|C|} \}$ and let $C_i = \{v_1,...,v_i \}$. GBC calculates \"B$(C_i)$ in the $i$-th iteration by adding to \"B$(C_{i-1})$ the total fraction of shortest paths that pass through $v_i$ but not $C_{i-1}$. Therefore, after $|C|$ iterations, \"B$(C)$ is found. 

GBC receives as input three $|C| \times |C|$ matrices $D$, $\sum$, and $\widetilde{B}$, which store for $\forall x, y \in C$ the distance between $x$ and $y$, $spc(x,y)$, and the path betweenness of $(x, y)$, respectively. After computing \"B$(C_i)$, GBC modifies $\widetilde{B}$ depending on $D$ and $\sum$ such that $\widetilde{B}_{v_{i+1},v_{i+1}} =$ \"B$(C_{i+1}) - $ \"B$(C_i)$. This makes it simple to compute \"B$(C_{i+1})$ in the subsequent iteration. Overall, GBC needs $O(T+|C|^3)$ time, where $T$ is the construction time for $D$, $\sum$, and $\widetilde{B}$. In tasks such as estimating group betweenness distribution, the $\#$ of groups to evaluate is vast. To reduce the online time necessary to generate $D$, $\sum$, and $\widetilde{B}$,~\cite{puzis2007fast} proposes to pre-compute and store the distance, the $\#$ of shortest paths, and the path betweenness for every pair of vertices, incurring prohibitive overhead. Although distance hub labeling and VC-dimension-based techniques~\cite{riondato2016fast} may minimize the cost associated with distance and path betweenness, the cost regarding \spc remains intractable.

\noindent(2) \underline{\textit{Road Networks}}.~\rrev{In real-world road network applications, the greater the number of shortest routes, the greater the number of traffic possibilities and the flexibility of route planning from the origin to the destination. For example, the top-$k$ nearest neighbours search attempts to locate $k$ objects adjacent to the query vertex inside a candidate set. It is a leading provider of taxi (e.g., Uber), restaurant (e.g., Tripadvisor), and hotel (e.g., Booking) recommendation services. A candidate item may be more desirable than others with the same or comparable distance if many shortest paths go to it, since we have more backup routing options and a greater chance of avoiding traffic congestion. In a movie ticket application, for instance, there are two theatres with the same shortest distance to the source location. Given the available traffic possibilities, we may choose the route with the most shortest paths. In addition to acting as a proximity measure, the shortest path count has been employed as a building component in the computation of betweenness centrality~\cite{pontecorvi2015faster,puzis2007fast}.
}
%A prominent application of the \spc problem is to determine the importance of road segments in the road networks~\cite{road1,road2,road3}. In urban networks, a road segment is defined as two adjacent junctions or between an intersection and a neighboring terminal point. In real road networks, however, some road segments are crowded with vehicles while others are not. Consequently, road segments in the central area of a city's road network are the most congested, whilst those in the periphery and environs are not. A typical query is to evaluate the ``degree'' of congestion on each road segment inside the city's road network. Each road segment's ``importance" may correspond to its degree of congestion. Then, the \spc problem establishes the relationship between road segments and its significance within the road network~\cite{road1,road2,road3}.

\noindent \rrev{\textbf{Challenges.}~In the above application, the main obstacle is the increasing graph size of real applications. At the current stage, large-scale graphs are common for real applications. Nevertheless, the state-of-the-art algorithm for building an index based on node orders limits its scalability. To address this issue, we carefully designed a different scalable algorithm for this problem. Our experiments demonstrate that our algorithm could build an index for large-scale graphs. In addition, the query time is scalable.}

\noindent \rrev{\textbf{Contributions.}~The primary contributions are listed below.}

\noindent \rrev{\textit{\underline{(1) Parallel Algorithm}}. This paper begins by analyzing the interdependence of the current hub labeling approach and explaining the challenge of parallel processing. It redesigns another propagation mechanism to avoid these dependencies by detecting the dependencies.}
%First, this work analyzes the dependencies of the state-of-the-art hub labeling method and explain the difficulty of parallel processing. By observing the dependencies, it redesigns another propagation algorithm to avoid these dependencies.}%Also, three properties of the index are analyzed and evaluated for our algorithm.

\noindent \rrev{\textit{\underline{(2) Acceleration Optimizations}}.~This paper investigates scheduling planning and landmark-based labeling to boost efficiency. Using these optimization techniques, our index creation phase might be accelerated significantly.}
%To further speed up the efficiency, this paper investigates scheduling planning, and landmark-based labeling. Using these optimization techniques, our index construction stage could be considerably expedited.}

\noindent \rrev{\textit{\underline{(3) Hybrid Vertex Ordering}}.~Classical vertex orders for the SPC involve ordering by the degree and ordering by the significant path. In road networks, the significant-path-based ordering surpasses the degree-based ordering. Nevertheless, significant-path-based ordering must select the next vertex based on the shortest path trees built in the current vertex's construction, implying a dependency for each vertex and leading parallel processing challenging. This paper proposes a vertex-based ordering for the road network and a degree-based ordering for the social network to fill this gap. Then, it mixes them to provide a hybrid ordering for a scalable index construction process.}
%Degree-based ordering and significant-path-based ordering are two traditional vertex orders for the \spc. The significant-path-based ordering outperforms the degree-based one, particularly in the road networks. Nevertheless, the significant-path-based ordering needs to choose the next vertex depending on the shortest path trees constructed in the current vertex's construction, indicating a dependence for each vertex and making parallel execution difficult. To fill this gap, this work proposes a vertex ordering for the road network and a degree-based ordering for the social network. Then, it combines them and create a hybrid ordering for an index creation process that is scalable.}
%The shortest path counting query is critical in many areas. Nevertheless, the state-of-the-art could not scale to the billion-scale graphs. With the rapid growth of information, it is common to cope with billion-scale graph datasets. To fill this gap, it carefully designs the scalable algorithm. To our best knowledge, this is the first shortest path counting algorithm to takle billion-scale graphs.

\noindent \rrev{\textit{\underline{(4) Comprehensive Experimental Evaluation}}.~In order to illustrate the effectiveness and efficiency of our algorithms, $10$ datasets are employed in this study. The experimental results indicate that our method outperforms the baselines in terms of index building time and generates comparable index sizes. Furthermore, our method scales approximately linearly with the number of threads.}

\noindent \textbf{Roadmap.}~The rest of the paper is organized as follows. Section~\ref{sect:related} presents important related works. Section~\ref{sect:pre} introduces the preliminaries of the \spc problem. Section~\ref{sect:exp} experimentally evaluates our proposed approaches on real small-world networks and Section~\ref{sect:con} concludes the paper.
\section{Preliminaries}
\label{sect:pre}
\noindent \textbf{Graphs}.~This paper concentrates on an unweighted and undirected graph denoted by $G=(V, E)$, where $V$ and $E$ represent the set of vertices and edges in $G$, respectively. Let $n$ = $|V|$ and $m = |E|$ denote the number of vertices and the number of edges, respectively. For each vertex $v \in V$, let nbr($v$) be the set of $v's$ neighbors and deg($v$) be the degree of $v$. A path $p$ from vertex $s$ to vertex $t$ is defined as a sequence of vertices $(s=v_0,v_1,...,v_l=t)$ such that $(v_i, v_{i+1}) \in E$ for $0 \leq i < l$. The length of $p$, denoted by $len(p)$, is the number of edges included in $p$. In other words, $len(p) = l$. For simplicity, this work uses the notation $rev(p)$ to denote the reverse of a path. Specifically, $rev(p) = (v_l,v_{l-1},...,v_0)$. A path from $s$ to $t$ is shortest if its length is no larger than any other path from $s$ to $t$. 

The notations are summarized in Table~\ref{tb:notations}.

\begin{table}[tb]
\small
  \centering
  \vspace{-1mm}
  \caption{The summary of notations}
\label{tb:notations}
    \begin{tabular}{|c|l|}
      \hline
      \cellcolor{gray!25}\textbf{Notation} & \cellcolor{gray!25}\textbf{Definition}        \\ \hline
      $G = (V, E)$ &  a given undirected and unweighted graph \\ \hline
      $m$, $n$ & the number of edges(vertices) for $G$ \\ \hline
      $nbr(v)$ & the set of neighbors of $v$ \\ \hline
      $deg(v)$ & the degree of $v$ \\ \hline
      $len(p)$ & the length of a path $p$ \\
               & $len(p) = l$ if $p = (v_0, v_1,..., v_l)$ \\ \hline
      $rev(p)$ & the reverse of a path $p$ \\
               & $rev(p) = (v_l, v_{l-1},...,v_0)$ if $p=(v_0,v_1,...,v_l)$ \\ \hline
      $p_{s,t}$& a shortest path from $s$ to $t$ \\ \hline
      $P_{s,t}$& the set of shortest paths from $s$ to $t$ \\ \hline
      $SPC_{s,t}$& the shortest path counting from $s$ to $t$ \\ \hline
      $L^C, L^{NC}$ & the canonical and non-canonical labels \\ \hline
      $dis(u, w)$ & the distance from vertex $u$ to $w$ \\ \hline
      $L(v)$ & ESPC index for vertice $v$ \\ \hline 
      %a total order $\leq$ & the order to construct ESPC \\ \hline
      $C_{v,w}$ & the trough path counting from $v$ to $w$ \\ \hline
      $rev(p)$ & the reverse path of $p$ \\ \hline
     \end{tabular}
%\vspace{-2mm}

\vspace{-1mm}
\end{table}

\begin{figure*}[htb]
    \centering
    \begin{minipage}{.4\linewidth}
        \centering
        \includegraphics[scale=0.3]{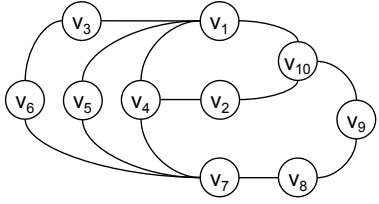}
        %\vspace{5mm}
        \caption{An Undirected Graph. The total order $\leq$ is $v_1 \leq v_7 \leq v_4 \leq v_{10} \leq v_3 \leq v_5 \leq v_6 \leq v_2 \leq v_8 \leq v_9$.}
        \label{fig:ori_g}
    \end{minipage}%
    \begin{minipage}{.7\linewidth}
        \centering
        \captionof{table}{Shortest Path Counting Labels of Fig.~\ref{fig:ori_g}}
        \vspace{-2mm}
        \scalebox{0.8}{
  \begin{tabular}{|l|l|l|}
    \hline
    \cellcolor{gray!25}\textbf{Vertex} & \cellcolor{gray!25}\textbf{$L(\cdot)$}\\
    \hline
    $v_{1}$ & $(v_{1},0,1)$\\
    \hline
    $v_{2}$ & $(v_{1},2,2)$ $(v_{7},2,1)$ $(v_{4},1,1)$ $(v_{10},1,1)$ $(v_{2},0,1)$\\
    \hline
    $v_{3}$ & $(v_{1},1,1)$ $(v_{7},2,1)$ $(v_{3},0,1)$\\
    \hline
    $v_{4}$ & $(v_{1},1,1)$ $(v_{7},1,1)$ $(v_{4},0,1)$\\
    \hline
    $v_{5}$ & $(v_{1},1,1)$ $(v_{7},1,1)$ $(v_{5},0,1)$\\
    \hline
    $v_{6}$ & $(v_{1},2,1)$ $(v_{7},1,1)$ $(v_{3},1,1)$ $(v_{6},0,1)$\\
    \hline
    $v_{7}$ & $(v_{1},2,2)$ $(v_{7},0,1)$\\
    \hline
    $v_{8}$ & $(v_{1},3,3)$ $(v_{7},1,1)$ $(v_{10},2,1)$ $(v_{8},0,1)$\\
    \hline
    $v_{9}$ & $(v_{1},2,1)$ $(v_{7},2,1)$ {$(v_{4},3,1)$} $(v_{10},1,1)$ $(v_{8},1,1)$ $(v_{9},0,1)$\\
    \hline
    $v_{10}$ & $(v_{1},1,1)$ {$(v_{7},3,2)$ $(v_{4},2,1)$} $(v_{10},0,1)$\\
    \hline
\end{tabular}
}
        %\captionof{table}{Labels of Figure~\ref{fig:ori_g}}
        \label{tab:labels}
        %\caption{Labels of Figure~\ref{fig:ori_g}}
    \end{minipage}
\vspace{-1mm}    
\end{figure*}

\subsection{2-Hop Labeling for Shortest Path Counting}
\label{subsect:2hopCounting}
To efficiently process point-to-point SPC queries, the 2-hop labeling technique~\cite{zhang2020hub} precomputes the SPC information from each node to pre-selected hub nodes and utilizes the $2$-hop via hubs to respond to a query. It presents the hub labeling method for the shortest path counting between vertices $s$ and $t$, SPC($s,t$). 
The shortest path counting between vertices $s$ and $t$ in a directed graph seeks to determine the total number of all the shortest paths from $s$ to $t$. \cite{zhang2020hub} proposed a 2-hop labeling scheme and an index construction algorithm to build the index efficiently and enable real-time shortest path counting queries. 
%The hub labeling scheme supports \underline{E}xact \underline{S}hortest \underline{P}ath \underline{C}overing (ESPC), which implies that it not only encodes the shortest distance between two vertices but also guarantees the correct counting of such shortest paths. \rev{(Or) A hub-pushing algorithm, HP-SPC, is proposed to construct the label that support ESPC.}
%\rev{The hub labeling scheme supports the cover contraint, \underline{E}xact \underline{S}hortest \underline{P}ath \underline{C}overing (ESPC), implies that it not only encodes the shortest distance between two vertices but also guarantees the correct counting of such shortest paths. The proposed algorithm, HP-SPC, is used to construct the SPC label that obeys ESPC.}
\rev{The hub labeling scheme supports the cover constraint, \underline{E}xact \underline{S}hortest \underline{P}ath \underline{C}overing (ESPC), which implies that it not only encodes the shortest distance between two vertices but also ensures that such shortest paths are correctly counted. HP-SPC is the algorithm developed for constructing the SPC label index that satisfies ESPC.}

Formally, given a directed graph $G$, \rev{HP-SPC} assigns each vertex $v \in G$ an in-label $L_{in}(v)$ and an out-label $L_{out}(v)$, consisting of entries of the form $(w, sd(v,w), \theta_{v,w})$. The shortest distance between $v$ and $w$ is denoted by $sd(v,w)$, and the number of shortest paths between $v$ and $w$ is denoted by $\theta_{v,w}$. If $w \in L_{in}(v)$ or $w \in L_{out}(v)$, then $w$ is considered a hub of $v$. 

In essence, the in-label $L_{in}(v)$ keeps track of the distance and counting information from its hubs to itself, whereas the out-label $L_{out}(v)$ records distance and counting information from $v$ to its hubs. \rev{HP-SPC} adheres to the cover constraint, which states that for each given starting vertex $s$ and ending vertex $t$, there exists a vertex $w \in L_{out}(s) \cap L_{in}(t)$ that lies on the shortest path from $s$ to $t$. 

In addition, \rev{HP-SPC} guarantees the correctness of shortest path counting by including the shortest path from $s$ to $t$ via a hub vertex once during the label construction. 
SPC($s,t$) is evaluated by scanning the $L_{out}(s)$ and $L_{in}(t)$ for the shortest distance via common hubs and adding the multiplication of the corresponding count. Equation \eqref{con:spc1} identifies all common hubs (on the shortest paths) from $L_{out}(s)$ and $L_{in}(t)$. Equation \eqref{con:spc2} determines the result of SPC($s,t$).

\begin{equation}
H = \{h | \mathop{\arg\min}\limits_{h \in L_{out}(s) \cap L_{in}(t)} \{sd(s,h) + sd(h,t)\}\} \label{con:spc1}
\end{equation}
\begin{equation}
SPC(s,t) = \sum_{h\in H} \theta(s,h) \cdot \theta(h,t) \label{con:spc2}
\end{equation}

\begin{example}
Figure~\ref{fig:ori_g} depicts \rrev{an undirected} graph with $10$ vertices, and Table~\ref{tab:labels} provides its hub labeling index for SPC queries. SPC($v_{10},v_{7}$) is used as an example to determine the shortest paths counting from $v_{10}$ to $v_{7}$. By scanning $L(v_{10})$ and $L(v_{7})$, two common hubs \{$v_{1}, v_{7}$\} are found. \rrev{The shortest distance through $v_{1}$ is $2 + 2 = 4$, whereas the counting is $2$ $\cdot$ $1 = 2$; The shortest distance via $v_{7}$ is $3 + 0 = 3$, and the counting is $1$ $\cdot$ $2 = 2$. Therefore, the number of shortest paths from $v_{10}$ to $v_{7}$ is $2 + 2 = 4$ with a length of $4$}. 
\end{example}

\section{\our \  algorithm description from the Covering Side}
\label{sect:tech1}

% \begin{framed}
% This section presents our fundamental parallel method for the SPC problem. First, it defines the \textit{Shortest Path Covering}.
% \end{framed}

This section presents our parallel method for the \spc problem. First, the \textit{Shortest Path Covering} is defined.

\begin{definition}(Shortest Path Covering).~$T(v)$ represents a set of entries of the form \rrev{$(w, C_{v, w})$}, where $C_{v, w} \subset P_{v,w}$ represents a subset of the shortest paths from $v$ to $w$. For each pair of vertices $u$ and $v$, the shortest paths are covered by $T(u)$ and $T(v)$ as a multiset as follows:
\begin{equation}
\begin{aligned}
   cover(T(u), T(v)) &= \\
   \{ p_1 \odot rev(p_2) &| (w, C_{u, w}) \in T(u), (w, C_{v,w}) \in T(v), \\
   & p_1 \in C_{u,w}, p_2 \in C_{v,w}, \\
   & sd(u, w) + sd(v, w) = sd(u, v) \}
\end{aligned}
\end{equation}
\rrev{It is noted that $w$ is the common vertex in both $T(u)$ and $T(v)$.}
\end{definition}

This is followed by the definition of \textit{Exact Shortest Path Covering}.
\begin{definition}(Exact Shortest Path Covering).~$T(\cdot)$ is an exact shortest path covering (ESPC for short), which means that for any two vertices $u$ and $v$, the multiset cover $(T(u), T(v))$ must be identical to $P_{u, v}$, i.e., the set of shortest paths between $u$ and $v$. 
\end{definition}

\noindent \rrev{\underline{\textit{The Construction of an ESPC Index.}}}~The construction process of an ESPC is explained. Let $\leq$ be a total order over $V$. A trough path~\cite{jiang2014hop} is a path whose endpoint is ranked higher than all the other vertices. A trough shortest path is a path that is both trough and shortest. \rrev{For a shortest path $p \in SP_{u,v}$, there exists a vertex $w$ with the highest rank. Thus, $p$ could be divided into two trough shortest paths $p_{u,w}$ and $p_{w,v}$. By doing this, we could find the exact shortest path covering by using the trough shortest paths of $u$ and $v$.}

Consider, for instance, the graph $G'$ in Figure~\ref{fig:ori_g}, and a total order $\leq$ where $v_1 \leq v_7 \leq v_4 \leq v_{10} \leq v_3 \leq v_5 \leq v_6 \leq v_2 \leq v_8 \leq v_9$. The path $(v_3 ,v_1, v_{10})$ is not a trough path since $v_1$ has a higher rank than both endpoints $v_3$ and \rrev{$v_{10}$}. The path $(v_6, v_3, v_1)$ is the trough shortest path because one endpoint
(i.e. $v_1$) has the highest rank and the path is the shortest. Given a total order $\leq$ over the vertices, an ESPC can be constructed as follows. $T(v)$ is initially empty for each vertex $v$. Then, for any two (potentially identical) vertices $v$ and $w$ with $w \leq v$, an entry $(w, C_{v, w})$ is added to $T(v)$, where $C_{v, w}$ is the set of all trough shortest paths from $v$ to $w$. This $C_{v, w}$ is not empty. Note that for every such label entry, since $w \leq v$, $w$ has the highest rank in $p$ for each path $p \in C_{v, w}$. $T_{\leq}(\cdot)$ and $L_{\leq}(\cdot)$ denote the $T(\cdot)$ constructed this way and the corresponding $L(\cdot)$, respectively.
The ESPC result in Figure~\ref{fig:ori_g} is shown in Table~\ref{tab:labels}.

Then, this work investigates why the state-of-the-art algorithm~\cite{zhang2020hub} cannot be parallelized. In the state-of-the-art algorithm, the label entries are separated into two types: canonical labels ($L^c$) and non-canonical labels ($L^{nc}$).

\rrev{$p_{s,t}$ denotes a shortest path in $G$ from $s$ to $t$. The shortest distance between $s$ and $t$ in $G$, denoted by $sd_G(s,t)$ is defined as the length of the shortest path between $s$ and $t$ in $G$. The set of all shortest paths from $s$ to $t$ is denoted by $P_{s,t}$, and the set of the vertices involved in $P_{s,t}$ is denoted by $Q_{s,t}$. $spc_G(s,t)$ denotes the number of shortest paths from $s$ to $t$ in $G$. When the context is clear, $sd(s,t)$ and $spc(s,t)$ are used instead of $sd_G(s,t)$ and $spc_G(s,t)$ for simplicity. Let $\leq$ be a total order over $V$. For two distinct vertices $w$ and $v$, if $w \leq v$, then $w$ has a higher rank than $v$. The total order of vertices is an order that the ESPC index constructs for each vertex. It could be obtained by ranking all the vertices in any order, e.g., sorting the vertices by the degree order.}

\rrev{\cite{zhang2020hub} defined the Canonical Hubs as follows: given a total order $\leq$ over the vertices, is one that comprises the following hubs: For two vertices $v$ and $w$, $w \in L(v)$ if and only if $w$ is the highest-ranked vertex in $Q_{v,w}$.}

\begin{definition}(Canonical Hubs).~A canonical hub labeling, given a total order $\leq$ over the vertices, is one that comprises the following hubs: For two vertices $v$ and $w$, $w \in L(v)$ if and only if $w$ is the highest-ranked vertex in $Q_{v,w}$.
\end{definition}
Thus, non-canonical hubs are those hubs that do not comply with the definition of Canonical hubs and also in ESPC.

\subsection{Trough Path Property}
\label{subsect:label_property}
The labels of $L^{c}$ demonstrate an essential node-order characteristic.
\begin{theorem}
\label{theo:label_property}
\rrev{For each pair of vertices $\forall u, v \in V$, $v$ is a hub of $u$ in $L^c$, i.e., $(v, dis(v,u), c(v,u)) \in L^c(u)$, if and only if paths $SP(u,...,v)$ are the trough paths between $u$ and $v$, i.e., $v$ is the highest-ranked node along all the shortest paths from $u$ to $v$.}
\end{theorem}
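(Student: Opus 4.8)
The plan is to prove the biconditional by reducing both directions to the \emph{Canonical Hubs} definition, which already asserts that $v \in L^c(u)$ exactly when $v$ is the highest-ranked vertex in $Q_{u,v}$, and then to translate the condition ``$v$ is the highest-ranked vertex in $Q_{u,v}$'' into the trough-path language used in the statement. The observation that bridges the two formulations is that $Q_{u,v}$ is, by definition, precisely the union of the vertex sets of all shortest paths in $P_{u,v}$. Hence a statement about the single highest-ranked element of $Q_{u,v}$ can be converted into a statement that holds simultaneously on every shortest path, and conversely.

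For the forward direction, I would suppose that $v$ is a hub of $u$ in $L^c$. By the definition of Canonical Hubs this means $v$ is the highest-ranked vertex in $Q_{u,v}$, i.e.\ $v \leq x$ for every $x \in Q_{u,v}$. Fixing an arbitrary shortest path $p \in P_{u,v}$, every vertex lying on $p$ belongs to $Q_{u,v}$, so $v$ is ranked strictly higher than every other vertex of $p$; therefore $v$ is the top-ranked endpoint of $p$, and $p$ is a trough shortest path. Since $p$ was arbitrary, all of $SP(u,\dots,v)$ are trough paths, which is the right-hand side of the statement.

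For the backward direction, I would suppose that every shortest path from $u$ to $v$ is a trough path with $v$ the highest-ranked node along it. Take any $x \in Q_{u,v}$; by the definition of $Q_{u,v}$ the vertex $x$ lies on some shortest path $p \in P_{u,v}$, and the trough property of $p$ forces $v \leq x$, with equality only when $x = v$. As $x$ ranged over all of $Q_{u,v}$, the vertex $v$ is the highest-ranked element of $Q_{u,v}$, and the Canonical Hubs definition then gives $(v, dis(v,u), c(v,u)) \in L^c(u)$.

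The main obstacle I anticipate is not a calculation but the careful handling of the universal quantifier ``along all the shortest paths.'' Being trough is a per-path property, whereas membership in $L^c$ is decided by the single highest-ranked vertex of $Q_{u,v}$; the argument must show these coincide and, in particular, must exclude the non-canonical situation in which $v$ is the top-ranked endpoint on some but not all shortest paths. In that excluded case there is a competing vertex $x \in Q_{u,v}$ outranking $v$ on another shortest path, which bars $v$ from $L^c$. Making explicit that $Q_{u,v}$ is exactly the union of all shortest-path vertex sets is what rules this out and cleanly separates the canonical labels from the non-canonical ones.
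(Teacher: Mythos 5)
Your proposal has a genuine gap: it is circular with respect to what the theorem actually asserts. In this paper $L^c$ is not \emph{defined} by the Canonical Hubs definition; it is the canonical portion of the index produced by the pruned-BFS construction of HP-SPC (cf.\ Lemma~\ref{lemma:order_depend}: an entry $(v_i, dis(v_i,u), \cdot)$ lands in $L^c_i(u)$ exactly when $Query(v_i, u, L^{SPC}_{<i})$ returns a distance strictly larger than $dis(v_i,u)$, and in $L^{nc}$ when it returns an equal one). The content of Theorem~\ref{theo:label_property} is precisely that this algorithmically determined label set coincides with the definitional characterization ``$v$ is the highest-ranked vertex in $Q_{u,v}$.'' By invoking the Canonical Hubs definition as if it already described $L^c(u)$, you assume the very equivalence to be proved; if the theorem were merely the definition restated, it would need no proof at all. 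Your one substantive step --- unfolding $Q_{u,v}$ as the union of the vertex sets of all shortest paths, so that ``highest-ranked in $Q_{u,v}$'' equals ``every shortest $u$--$v$ path is a trough path with top-ranked endpoint $v$'' --- is correct, but it is only the trivial bookkeeping between the two phrasings, not the theorem's content.

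What is missing is any reasoning about the construction itself, which is where the paper's proof lives. The paper takes $u_h$, the highest-ranked vertex on the shortest $u$--$v$ paths, and argues by contradiction that the pruned BFS sourced at $u_h$ can neither prune nor fail to reach any vertex $u_c$ on those paths: pruning would require a hub common to $u_h$ and $u_c$ witnessing the shortest distance (the ESPC cover constraint forces such a witness to exist), and that hub would have to outrank $u_h$ on a shortest path through $u_c$, contradicting the maximality of $u_h$. It then concludes by cases: if $u_h = v$, the entry for $v$ is inserted into $L^c(u)$; if $r(u_h) > r(v)$, the pair $(u,v)$ is already covered by the common hub $u_h$ before the BFS from $v$ runs, so $v$'s entry on $u$ is pruned or relegated to $L^{nc}$. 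A correct proof needs some version of this argument about query-based pruning during index construction; your proposal never touches the algorithm, so both directions of your biconditional rest on an unproved identification of the algorithmic $L^c$ with the canonical characterization.
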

\begin{proof}
This property is proved by contradiction. $SP(u,..,v)$ represents the set of all shortest paths from $u$ to $v$. Consider a node $u_h$ that is the highest-ranked node in $SP(u,...,v)$. Assume that there exists a node $u_c$ in $SP(u,...,v)$ such that $u_c$ does not have a hub of $u_h$ in the label set. $L^c_{< u_h}$ denotes the label index when finishing the pruned BFS for all vertices whose rank is higher than $u_h$. Consider the construction of ESPC, and the iteration when the pruned BFS sourced $u_h$ is performing. When there is no hub of $u_h$ for $u_c$ in this iteration, then either
\begin{itemize}
    \item Query$(u_c, u_h, L^c_{< u_h}) = (d_1, c_1)$ and $d_1 < dis(u_c, u_h)$, or
    \item $u_c$ is not explored in the Pruned BFS sourced from $u_h$, indicating that there is a vertex $u_c'$ on the shortest paths from $u_c$ to $u_h$ and could be pruned with Query($u_c', u_h, L^c_{< u_h} = (d_2, c_2)$ ) with $d_2 < dis(u_h, u_c')$.
\end{itemize}

In either scenario, it needs a common hub between $u_h$ and $u_c$ to produce the query result. Otherwise, some shortest paths would be omitted from in our index, which would violate the definition of ESPC. Nevertheless, such a common hub cannot exist since i) $u_c, u_c' \in SP(u,...,v)$ and ii) $u_h$ is the highest-ranked vertex in $SP(u,...,v)$. This results in a contradiction.

Since all nodes in $SP(u,...,v)$ have $u_h$ as their hubs, the theorem can be proved in the two cases: i) $u_h = v$, i.e., $v$ is the highest-ranked vertex in $SP(u,...,v)$, the $v$ is a hub of $u$ and ii) if $r(u_h) > r(v)$, when before the pruned BFS sourced from $v$ is performed, $u_h$ is already a common hub of $u$ and $v$. Since $u_h$ is on the shortest path between $u$ and $v$, the label with hub $v$ on $u$ is inserted into the $cL$ or pruned.
\end{proof}

\subsection{Order Property}
\label{subsect:order_property}
The labels are partitioned in the index according to their hub nodes to see the dependency among the labels. Let $v_1 \leq v_2 \leq ... \leq v_n$ represent the node order under which label set the index was constructed.

Two distinct sets are defined. Recall that the \base \ index consists of $n$ iterations where the $i$-th iteration executes a pruned BFS sourced from $v_i$. $L^{SPC}_{<i}(u)$ is the snapshot of $L^{SPC}(u)$ at the beginning of the $i$-th iteration, and by $L^{SPC}_i(u)$ the incremental label of $u$ built in the $i$-th iteration. It is notable that $L^{SPC}_{<i}(u) = L^{c}_{<i}(u) \cup L^{nc}_{<i}(u)$

\begin{definition}(Order Specific Label Set).~$L^{SPC}(u) = (v_i, dis(v_i, u), c)$ $\in L^{SPC}$, for 
$\forall i \in [1, n]$, $u \in V$. Let $L^{SPC}_i = \bigcup_{u \in V} L^{SPC}_i(u)$.
\end{definition}

\begin{definition}(Order Partial Label Set).~$L^{SPC}_{<i}(u) = {(v_j, dis(v_j, u), c) \in L^{SPC} | j < i}$
for $\forall i \in [1, n+1]$, $u \in V$. Let $L^{SPC}_{<i} = \bigcup_{u \in V} L^{SPC}_{<i}$. $L^{SPC}_{<n+1} = L^{SPC}$.
\end{definition}

The following lemma demonstrates that the pruning condition in \base \ results in an order dependency among labels.

\begin{lemma}[Order Dependency]
\label{lemma:order_depend}
The $L^{SPC}_i$ would depend on $L^{SPC}_{<i}(u)$. Specifically, $L^{SPC}_i(u)$ would be updated as follows:
\begin{itemize}
    \item If $Query(v_i, u, L^{SPC}_{<i}) = (d_1, c_1)$, and $d_1 > dis(v_i, u)$, then $L^c_i(u) = {(v_i, dis(v_i, u), count(v_i, u))}$.
    \item If $Query(v_i, u, L^{SPC}_{<i}) = (d_1, c_1)$, and $d_1 = dis(v_i, u)$, then $L^{nc}_i(u) = {(v_i, dis(v_i, u), count(v_i, u) + c_1)}$.
    \item Otherwise, $L^{SPC}_i(u) = \emptyset$.
\end{itemize}

\end{lemma}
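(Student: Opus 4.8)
The plan is to establish the lemma by directly unfolding one iteration of the pruned BFS used in \base\ and tracking how its label-insertion rule reads the partial index $L^{SPC}_{<i}$. First I would recall that iteration $i$ runs a BFS rooted at $v_i$ that, for every vertex $u$ it settles, computes the true shortest distance $dis(v_i,u)$ together with the count $count(v_i,u)$ of shortest $v_i$--$u$ paths that survive pruning (equivalently, those whose highest-ranked vertex is $v_i$), and then evaluates $Query(v_i,u,L^{SPC}_{<i})=(d_1,c_1)$ on the index accumulated in iterations $1,\dots,i-1$. The whole argument then reduces to comparing $d_1$ with $dis(v_i,u)$, so the first step is to pin down the possible outcomes of this comparison.

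The key preliminary fact is soundness of the partial index: every (distance, count) pair returned by $Query(v_i,u,L^{SPC}_{<i})$ corresponds to genuine shortest paths routed through already-processed hubs, so $d_1 \ge dis(v_i,u)$ always. This immediately rules out $d_1 < dis(v_i,u)$ at any settled vertex, which is why the third bullet degenerates to the pruning case: a vertex $u$ receives the empty increment $L^{SPC}_i(u)=\emptyset$ exactly when the BFS never settles it, because each of its shortest paths from $v_i$ meets a vertex that was already covered and hence not expanded.

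With only $d_1 > dis(v_i,u)$ and $d_1 = dis(v_i,u)$ remaining, I would dispatch the two nontrivial cases using the trough-path characterization of canonical hubs in Theorem~\ref{theo:label_property}. If $d_1 > dis(v_i,u)$, then no already-processed vertex lies on a shortest $v_i$--$u$ path; by Theorem~\ref{theo:label_property} this forces $v_i$ to be the highest-ranked vertex on every shortest path from $v_i$ to $u$, so $v_i$ is a canonical hub of $u$ and the full trough count $count(v_i,u)$ is inserted, giving $L^c_i(u)=\{(v_i,dis(v_i,u),count(v_i,u))\}$. If instead $d_1 = dis(v_i,u)$, some higher-ranked hub $w$ already covers the shortest distance with count $c_1$, so $v_i$ violates the canonical condition and the entry is recorded as non-canonical; here I would verify that $count(v_i,u)+c_1$ is precisely the count that makes $cover(T(v_i),T(u))$ reproduce $P_{v_i,u}$ exactly, i.e.\ that the paths newly discovered through $v_i$ together with the $c_1$ paths already covered account for all of $spc(v_i,u)$ without omission or double counting. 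Reconciling this count, and in particular confirming that the $+c_1$ term is exactly what the ESPC definition and the query semantics of Equations~\eqref{con:spc1}--\eqref{con:spc2} demand under pruning, is the step I expect to be the main obstacle. Finally, since each of the three rules is selected purely from the value of $Query(v_i,u,L^{SPC}_{<i})$, the incremental label set $L^{SPC}_i$ is a deterministic function of $L^{SPC}_{<i}$, which is exactly the claimed order dependency.
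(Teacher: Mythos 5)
Your proposal is correct and takes essentially the same route as the paper's own proof: both reduce the dichotomy to Theorem~\ref{theo:label_property} via the highest-ranked vertex $w$ on the shortest $v_i$--$u$ paths, obtaining $d_1 > dis(v_i,u)$ exactly when $w = v_i$ (the canonical case) and $d_1 = dis(v_i,u)$ when $r(w) > r(v_i)$, since $w$'s labels to both $v_i$ and $u$ already sit in $L^{SPC}_{<i}$ (the non-canonical case). Your two small additions --- the soundness observation $d_1 \geq dis(v_i,u)$ that identifies the \emph{otherwise} bullet with pruned/unsettled vertices, and your flagging of the $+c_1$ count reconciliation as the main obstacle --- go slightly beyond the paper, which skips the third bullet entirely and likewise leaves the count bookkeeping unverified, so no gap relative to the paper's own standard of proof.
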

\begin{proof}
Let $S$ be the set of nodes on the shortest path from $v_i$ to $u$ (including $v_i$ and $u$). Let $w$ be the node with the highest rank in $S$. If $v_i = w$, according to Theorem~\ref{theo:label_property}, i) $v_i$ is a hub of $u$ in $L^c$ and ii) for $\forall v \in S \ v_i$, $v$ is not a hub of $v_i$ in $L^{SPC}$, and hence, if Query($v_i, u, L^{SPC}_{<i} = (d_1, c_1)$), then $d_1 > dis(v_i, u)$. If $r(v_i) < r(w)$, then $v_i$ is not a hub of $u$ in $L^c$ and label $(w, dis(w, v_i), d_{w1}), (w, dis(w, u), d_{w2}) \in L^{SPC}_{<i}$ and thus Query$(v_i, u, L^{SPC}_{<i}) = (d_2, c_2)$, then $d_2 = dis(v_i, u)$. It is only necessary to update $L^{nc}_{\leq i}$ in the $i$-th iteration.
\end{proof}

Lemma~\ref{lemma:order_depend} demonstrates that $L^{SPC}_i(u)$ depends on $L^{SPC}_{<i}$ while $L^{SPC}_{<i}(u)$ depends on $L^{SPC}_{i-1}$. Such a convolved dependency is difficult to remove so long as the labels are constructed in the node order.

\subsection{Distance Dependency}
\label{subsect:dis_depend}
\rrev{To break the order dependency in the label construction, we consider the pruning condition where Query$(v_i, u, L^{SPC}_{<i}) = (dis(u, v_i), count(u, v_i))$. It prunes a node label on $u$, and there must be two labels on $u$ and $v_i$ to a common hub $w$ such that $dis(u,w) + dis(w, v_i) < dis(u, v_i)$.} Consequently, $dis(u, w)$ and $dis(w, v_i)$ must be smaller than $dis(u, v_i)$. In other words, none of the labels with distances greater than $dis(u, v_i)$ influence the query result of Query$(v_i, u, L^{SPC}_{<i})$ and the corresponding pruning outcomes. 

Based on the above intuition, the label entries in $L^{SPC}$ are categorized by their label distances. \rrev{The reorganized label sets will pave the way to our \underline{P}arallel \underline{S}hortest \underline{P}ath \underline{C}ounting Labeling approach and are hence referred to as PSPC label sets.} Let $D$ be the diameter of graph $G$.

\begin{definition}
\label{def:DSLS}
(Distance Specific Label Set).~$L^{PSPC}_{d}(u)$ $=$ ${(u, dis(v,u), c)} \in L^{SPC}(v) | dis(v, u) = d$,
for $\forall u \in V, d \in [1,D]$. Let $L^{PSPC}_d = {L^{SPC}_d(u) | u \in V}$.
\end{definition}

Similarly, the partial label of a node then becomes the set of label entries with a distance no larger than a certain distance and is defined in Definition~\ref{def:DPLS}.
\begin{definition}
\label{def:DPLS}(Distance Partial Label Set).~$L^{PSPC}_{\leq d}(u)$ $=$ ${(v, dis(v, u), c) \in L^{SPC}(u) | dis(v, u) \leq d}$,
for $\forall u \in V$, $d \in [1, D+1]$. Let $L^{PSPC} = \bigcup_{u \in V}L^{PSPC}(u)$. In particular, $L^{PSPC}(u) = L^{PSPC}_{\leq D+1}(u)$.
\end{definition}

Theorem~\ref{theo:equal} describes the equivalence between the index $L^{SPC}$ and the new index $L^{PSPC}$.
\begin{theorem}
\label{theo:equal}
$L^{SPC} = L^{PSPC}$.
\end{theorem}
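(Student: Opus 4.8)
The plan is to prove $L^{SPC} = L^{PSPC}$ as an equality of label (multi)sets by double inclusion, working one vertex at a time and then taking the union over $V$. Fixing $u \in V$, I would compare $L^{SPC}(u)$ against $L^{PSPC}(u) = \bigcup_{d} L^{PSPC}_d(u)$; since $L^{PSPC} = \bigcup_{u} L^{PSPC}(u)$ and $L^{SPC} = \bigcup_{u} L^{SPC}(u)$, proving $L^{SPC}(u) = L^{PSPC}(u)$ for every $u$ suffices. At bottom the claim is that the distance-stratified organization of Definitions~\ref{def:DSLS} and~\ref{def:DPLS} is a genuine partition of $L^{SPC}$: every entry lands in exactly one distance block, and the blocks together exhaust the index.

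For the inclusion $L^{SPC}(u) \subseteq L^{PSPC}(u)$, I would observe that each entry $(v, dis(v,u), c) \in L^{SPC}(u)$ carries a single well-defined distance value $dis(v,u)$, and that this value lies in $\{0,1,\dots,D\}$ because $D$ is the diameter of $G$. By Definition~\ref{def:DSLS} the entry is therefore placed in the block $L^{PSPC}_{dis(v,u)}(u)$, hence in $L^{PSPC}(u)$. The reverse inclusion $L^{PSPC}(u) \subseteq L^{SPC}(u)$ is immediate from the same definition, since each block $L^{PSPC}_d(u)$ is carved out of $L^{SPC}(u)$. This direction is essentially definitional; the only care needed is to confirm disjointness and exhaustiveness (every entry has exactly one distance, and the distance-$0$ self-entry together with distances $1,\dots,D$ covers all cases), which also reconciles the $[1,D]$ versus $[1,D+1]$ index ranges appearing in Definitions~\ref{def:DSLS} and~\ref{def:DPLS}.

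The part that carries real content, and that makes the equality useful rather than cosmetic, is showing that the distance-stratified index can be \emph{reconstructed level by level} and still agrees with the node-order construction, so the regrouping preserves the canonical/non-canonical classification. Here I would proceed by induction on the distance level $d$ and lean on the Distance Dependency observation of Section~\ref{subsect:dis_depend}: whether a level-$d$ entry on $u$ is admitted as canonical, merged into a non-canonical count, or dropped is decided entirely by $Query(v_i, u, L^{SPC}_{<i})$, and any common hub $w$ witnessing the decision satisfies $dis(u,w) + dis(w, v_i) \le dis(u, v_i) = d$ with \emph{both} summands strictly smaller than $d$ whenever $w \notin \{u, v_i\}$. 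Consequently the decision depends only on entries of distance $< d$, i.e.\ on $L^{PSPC}_{\le d-1}$, which the induction hypothesis has already fixed. Combining this with the explicit case split of Lemma~\ref{lemma:order_depend} (canonical when the queried distance exceeds $dis(v_i,u)$, non-canonical merge when it equals $dis(v_i,u)$) shows that the entries produced at level $d$ coincide exactly with $L^{PSPC}_d$, and hence with the level-$d$ slice of $L^{SPC}$.

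I expect the main obstacle to be the boundary behaviour at small distances: the treatment of the distance-$0$ self-labels, and the degenerate witness where the pruning hub $w$ coincides with the BFS source $v_i$, so that $dis(w,v_i)=0$ and $dis(u,w)=d$ rather than both summands being strictly below $d$. In that case the ``strictly smaller than $d$'' argument must be applied only to the nontrivial summand after separating out the self-label, and I would invoke Theorem~\ref{theo:label_property} to confirm that $w$ is still the highest-ranked vertex on the relevant shortest paths, so no level-$d$ entry is spuriously created or lost. Verifying this edge case cleanly — together with checking that entries added in distinct node-order iterations but sharing a common distance are neither double-counted nor dropped when regrouped by distance — is where the bulk of the careful bookkeeping lies.
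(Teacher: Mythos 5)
Your double-inclusion argument in the first two paragraphs is correct and is, in substance, exactly the paper's proof: the paper disposes of Theorem~\ref{theo:equal} in one line, observing that every label $(v, dis(v,u), c)$ in $L^{SPC}$ has $dis(v,u) \leq D$, so the distance-stratified sets of Definitions~\ref{def:DSLS} and~\ref{def:DPLS} contain all labels of $L^{SPC}$ and, being carved directly out of $L^{SPC}$, no others. Your attention to the distance-$0$ self-labels and the $[1,D]$ versus $\leq D+1$ index ranges is a legitimate (and in fact necessary) bit of bookkeeping that the paper glosses over, since the self-entry $(u,0,1)$ lies in no block $L^{PSPC}_d(u)$ with $d \in [1,D]$ but is captured by $L^{PSPC}_{\leq D+1}(u)$. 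Where you diverge from the paper is in your third and fourth paragraphs: the level-by-level inductive reconstruction you identify as ``the part that carries real content'' is not part of this theorem at all. Theorem~\ref{theo:equal} asserts only that regrouping the already-constructed index $L^{SPC}$ by distance is lossless --- a purely definitional fact --- whereas the claim that the stratified index can be \emph{rebuilt} distance level by distance level, with pruning decisions at level $d$ depending only on entries of distance $\leq d$, is precisely the content the paper proves separately as Theorem~\ref{theo:dis_depend} and Lemma~\ref{lemma:proga}. Your sketch of that induction (including the degenerate-witness case $w \in \{u, v_i\}$) broadly matches the paper's later arguments, so nothing you wrote is wrong; but for the statement at hand it is superfluous, and folding it in conflates the static equality with the dynamic constructibility result that the paper deliberately keeps separate.
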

\begin{proof}
Since each label $(v, dis(u, v), c)$ in $L^{SPC}$ has $dis(v, u) \leq D$, $L^{PSPC}$ contains all labels in $L^{SPC}$ and, by definition, no additional labels.
\end{proof}

\noindent \textbf{Distance Dependency.}~Definitions~\ref{def:DSLS} and~\ref{def:DPLS} provide us the option to eliminate the order dependency in the label construction process.

\begin{theorem}
\label{theo:dis_depend}
$L^{PSPC}_d(u)$ relies on $L^{SPC}_{\leq d}$. Specifically, given a node $u$, for a node $v \in V$ with $r(v) > r(u)$ and $dis(u, v) = d$, $(v, d, count(v, u)) \in L^{PSPC}_{d}(u) \in L^{SPC}_d(u)$ if and only if Query$(u, v, L^{PSPC}_{\leq d}) = (d_0, c_0)$ and $d_0 > d$
\end{theorem}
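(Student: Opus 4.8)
The plan is to recognize this statement as the distance-indexed analogue of the Order Dependency lemma (Lemma~\ref{lemma:order_depend}): it replaces the order-prefix $L^{SPC}_{<i}$ by the distance-prefix $L^{PSPC}_{\leq d}$, and the whole content is that truncating the label set at distance $d$ does not change the pruning decision for an entry at distance $d$. I would first reduce the left-hand side to the canonical-hub characterization. By Theorem~\ref{theo:label_property}, for $r(v) > r(u)$ the membership $(v, d, count(v,u)) \in L^{SPC}_d(u)$ is equivalent to $v$ being the highest-ranked vertex on every shortest path from $u$ to $v$. So the theorem reduces to proving: $v$ is the highest-ranked vertex on the $u$--$v$ shortest paths if and only if $Query(u, v, L^{PSPC}_{\leq d}) = (d_0, c_0)$ with $d_0 > d$.

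The central step is a distance-cutoff argument showing that $L^{PSPC}_{\leq d}$ already contains every label that could contribute a covering of the $u$--$v$ shortest paths, so the truncation is lossless. I would first note that, by the triangle inequality, any common hub $w$ satisfies $dis(u,w) + dis(w,v) \geq dis(u,v) = d$, hence the query never returns less than $d$ and the only dichotomy is $d_0 = d$ versus $d_0 > d$. Moreover, a covering that attains $d$ must use a hub $w$ with $dis(u,w) \leq d$ and $dis(w,v) \leq d$, so its two entries necessarily lie in $L^{PSPC}_{\leq d}(u)$ and $L^{PSPC}_{\leq d}(v)$; conversely, any entry at distance strictly greater than $d$ contributes a sum strictly greater than $d$ and is irrelevant. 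Therefore the truncated query returns $d$ exactly when the full query does.

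It then remains to tie the value $d_0$ to the ranking of $v$. For the contrapositive of the ``if'' direction, suppose $v$ is not the highest-ranked vertex, and let $w$ be the highest-ranked vertex on some shortest $u$--$v$ path $p$, so $r(w) > r(v)$ and $dis(u,w) + dis(w,v) = d$. Since the prefix and suffix of $p$ split at $w$ are themselves shortest paths on which $w$ remains the highest-ranked vertex, Theorem~\ref{theo:label_property} yields $(w, dis(u,w), \cdot) \in L^c(u)$ and $(w, dis(w,v), \cdot) \in L^c(v)$, both at distance at most $d$; hence the common hub $w$ forces $d_0 = d$. For the reverse direction, if $Query(u, v, L^{PSPC}_{\leq d}) = (d_0, c_0)$ with $d_0 = d$, then some common hub $w$ with $r(w) > r(v)$ witnesses a shortest path through $w$, so $v$ is not the highest-ranked vertex and, by the reduction, $(v, d, \cdot) \notin L^{SPC}_d(u)$. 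Combining the two directions gives the claimed equivalence, and the recorded count $count(v,u)$ is consistent by the canonical count semantics inherited from Lemma~\ref{lemma:order_depend}.

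The main obstacle will be the distance-cutoff argument of the second paragraph: one must verify rigorously that the covering hub $w$ genuinely appears as a canonical hub in \emph{both} $L^{PSPC}_{\leq d}(u)$ and $L^{PSPC}_{\leq d}(v)$ with the correct distances, which requires that the prefix/suffix decomposition of $p$ at $w$ preserves the ``highest-ranked vertex'' property so that Theorem~\ref{theo:label_property} applies to each subpath. Equally delicate is ruling out that truncation to distance $\leq d$ discards a hub that the order-based construction would have exploited; this reduces precisely to the monotonicity observation of the Distance Dependency discussion, namely that no covering hub for a distance-$d$ shortest path can sit at distance greater than $d$.
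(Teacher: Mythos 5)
Your overall route coincides with the paper's own proof: reduce membership in $L^{PSPC}_d(u)$ to the trough-path characterization of Theorem~\ref{theo:label_property}, case-split on whether $v$ is the highest-ranked vertex among the vertices of the shortest $u$--$v$ paths, and observe via the triangle inequality that truncating the label set at distance $d$ is lossless, since any hub covering a distance-$d$ pair sits at distances at most $d$ (in fact strictly less, unless it is an endpoint). Your explicit dichotomy $d_0 = d$ versus $d_0 > d$ is a sharpening the paper leaves implicit, but it is the same argument, not a different one.

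However, the step you yourself flag as the ``main obstacle'' is left unresolved, and as written it fails. You take $w$ to be the highest-ranked vertex on \emph{some} shortest $u$--$v$ path $p$ and then invoke Theorem~\ref{theo:label_property} on the prefix and suffix of $p$ split at $w$. But Theorem~\ref{theo:label_property} requires $w$ to be the highest-ranked vertex on \emph{all} shortest paths between the two endpoints; being highest-ranked on the one subpath of $p$ is not sufficient. A second shortest $u$--$w$ path could pass through a vertex $z$ with $r(z) > r(w)$, in which case $w$ is not a canonical hub of $u$ and your common-hub witness evaporates. The paper's proof avoids this by choosing $w$ as the highest-ranked vertex of $S$, the set of vertices lying on \emph{all} shortest $u$--$v$ paths, and the gap then closes by a concatenation argument you would need to supply: since $dis(u,w)+dis(w,v)=d$, any shortest $u$--$w$ path followed by any shortest $w$--$v$ path is itself a shortest $u$--$v$ path, so every vertex of every shortest $u$--$w$ (resp.\ $w$--$v$) path belongs to $S$ and is outranked by $w$, which licenses Theorem~\ref{theo:label_property} on both sides. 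With $w$ chosen maximal over $S$ rather than over a single path, your contrapositive goes through verbatim. A second, smaller gloss (shared by the paper) sits in your converse direction: you assert the common hub $w$ achieving $d_0 = d$ satisfies $r(w) > r(v)$, which requires ruling out $w = v$ itself --- the candidate entry $(v,d,\cdot)$ together with the self-hub $(v,0,1) \in L(v)$ would already force $d_0 = d$; this is sound only under the (algorithmically correct) reading that the query runs against the labels present before the candidate entry is inserted.
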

\begin{proof}
Consider a node $v$ with $dis(u, v) = d$. $S$ denotes the set of nodes on the shortest paths from $u$ to $v$ and let $w$ be the highest-ranked node in $S$. According to Theorem~\ref{theo:label_property}, there are two cases that are exclusive:
\begin{enumerate}
    \item $w = v$ iff $v$ is the hub of $i$ in $L^c$.
    \item $w \neq v$ indicates that
    \begin{enumerate}
        \item $w$ is the hub of both $u$ and $v$, and
        \item $dis(u, w), dis(w, v) \leq d$
        and hence, Query$(u, v, L^{PSPC}_{\leq d}) = (d_0, c_0)$  and $d_0 = d$.
    \end{enumerate}
\end{enumerate}
Therefore, if $(v, dis(v, u), c) \notin L^{PSCP}_{d}(u)$, namely, $v$ is not a hub of $u$, then $w \neq v$, and then $Query(u, v, L^{PSCP}_{\leq d}) = (d_0, c_0)$ and $d_0 = d$. Besides, if $(v, dis(v, u), c_0) \in L^{PSPC}_{d}(u)$, namely, $v$ is a hub of $u$, $v$ is the highest-ranked node in $S$ and therefore, no other node in $S$ can be a hub of $v$, that is, $Query(u, v, L^{PSPC}_{\leq d}) = (d_0, c_0)$ and $d_0 > d$.
\end{proof}

By transforming the order dependency to distance dependency, the index may be constructed in $D$ iterations, where $D$ denotes the diameter of the graph.

\subsection{The Parallelized Labeling Method}
\label{subsect:paralM}
To apply Theorem~\ref{theo:dis_depend} to construct $L^{PSPC}_d(u)$, it is costly to examine all the node pairs with a distance equal to $d$. This section provides a practical algorithm, Parallel Shortest Path Counting (PSPC), to construct the index $L^{PSPC}$ in label propagation.

\noindent \textbf{Propagation-Based Label Construction.}~This subsection provides a positive answer to the following question: can $L^{PSPC}_d(u)$ be built by gathering the labels of its neighbors, namely, $L^{PSPC}_{d-1}(v)$ is sufficient to create $L^{PSPC}_{d}(u)$ in Lemma~\ref{lemma:proga}.

\begin{lemma}
\label{lemma:proga}
All the hub nodes of labels in $L^{PSPC}_{d}(u)$ appear in labels $\bigcup_{v \in N(u) L^{PSPC}_{d-1}(v)}$ as hub nodes.
\end{lemma}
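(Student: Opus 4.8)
The plan is to reduce the statement to a purely combinatorial fact about prefixes of trough shortest paths, so that no reasoning about the propagation algorithm itself is needed. First I would recast the phrase ``$v$ is a hub node of $u$'' in terms of trough shortest paths: by the construction of the ESPC index, an entry $(v, dis(v,u), c)$ lies in $L^{SPC}(u)$ exactly when $v \leq u$ and $C_{u,v} \neq \emptyset$, i.e.\ there is at least one shortest path from $u$ to $v$ on which $v$ is the highest-ranked vertex. For the canonical part $L^c$ this is precisely Theorem~\ref{theo:label_property}; the non-canonical entries are exactly those for which such a trough shortest path exists but $v$ fails to be highest-ranked on \emph{some} other shortest $u$-$v$ path. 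In both cases, membership as a hub node is equivalent to the existence of a single trough shortest $u$-$v$ path, and this is the only property the argument will use.

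The core of the proof is then a one-edge peeling. Fix a hub node $v$ of $L^{PSPC}_d(u)$, so $dis(u,v)=d$ and there is a trough shortest path $p=(u=x_0,x_1,\ldots,x_d=v)$ with $v \leq x_i$ for every $i$. I would set $w=x_1$; then $w \in N(u)$, and since a suffix of a shortest path is shortest, the subpath $q=(x_1,\ldots,x_d)$ is a shortest $w$-$v$ path, giving $dis(w,v)=d-1$. Because $v$ is highest-ranked over all of $p$ it is highest-ranked over the subpath $q$ as well, so $q$ is itself a trough shortest $w$-$v$ path; in particular $v \leq w$. By the characterization above this forces $v$ to be a hub node of $w$ at distance $d-1$, i.e.\ $(v,d-1,c') \in L^{PSPC}_{d-1}(w)$. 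As $w \in N(u)$, the hub node $v$ therefore appears in $\bigcup_{v' \in N(u)} L^{PSPC}_{d-1}(v')$, which is exactly the claim.

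The step I expect to require the most care is justifying the trough-shortest-path characterization uniformly for the \emph{non-canonical} hub nodes, since Theorem~\ref{theo:label_property} is stated only for $L^c$. I would settle this by arguing directly from the ESPC construction that a non-canonical entry for $v$ on $u$ is added precisely to account for the shortest $u$-$v$ paths whose highest-ranked vertex is $v$ (the members of $C_{u,v}$); hence its presence still guarantees a trough shortest $u$-$v$ path, and the peeling argument applies verbatim. A secondary point to verify is the base layer $d=1$, where $w=x_1=v$ and the ``neighbor label'' invoked is the distance-$0$ self-entry $(v,0,1)$; I would treat these self-labels as the seed of the propagation so that $d=1$ is not a genuine exception. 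Finally I would note that $q$ is a simple path with a well-defined length $d-1$, which is immediate since it is a subpath of the simple shortest path $p$.
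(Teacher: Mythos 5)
Your proof is correct and takes essentially the same route as the paper's: both arguments extract a witness shortest path from $u$ to the hub on which that hub is a hub of every vertex (equivalently, a trough shortest path), and peel off the first edge so that the remaining suffix places the hub in a neighbor's label at distance $d-1$ --- the paper's ``predecessor $s$ of $u$'' is exactly your $w = x_1$. The only differences are presentational: the paper phrases the step as a contradiction via the pruned-BFS construction, whereas you argue directly and make explicit the trough-shortest-path characterization of hub membership (including the non-canonical entries and the distance-$0$ self-label base case) that the paper's proof leaves implicit.
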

\begin{proof}
It is shown that if a node is not a hub of any node $v \in N(u)$ in $L^{PSPC}_{d-1}(v)$, then it is not a hub of $u$ in $L^{PSPC}_{d}(u)$. Let $w \neq u$ be a hub of $u$ in $L^{PSPC}_{d}(u)$ but is not a hub of any node $v \in N(u)$ in $L^{PSPC}_{d-1}(v)$. Note that the $SPC$ was built in a BFS search. Consider the iteration when the pruned BFS search is sourced from $w$. Since $w \neq u$ and $w$ is a hub of $u$, there is a shortest path from $w$ to $u$ such that $w$ is a hub of all nodes on the path. Let $s$ be the predecessor of $u$ on the shortest path. $s \in N(v)$ and $(w, dis(w, s), c_{w,s}) \in L^{SPC}$. Since $dis(w, s) = d - 1$, $w$ is a hub of $L^{PSCP}_{d-1}(s)$, this leads to the contradiction.
\end{proof}

\noindent \underline{\textit{Pruning Conditions}}.~According to Lemma~\ref{lemma:proga}, $L^{PSPC}(u)$ may be constructed iteratively, with the initial condition being the insertion of $u$ into the label $L^{SPC}_{0}(u)$ as its own hub. Nevertheless, pouring all nodes in $\bigcup_{v \in N(u)} L^{SPC}_{d-1}(v)$ directly into $L^{SPC}_{d}(u)$ produces a large set of candidate labels. Consequently, two rules are proposed to prune superfluous label entries.

\begin{lemma}
\label{lemma:prune1}
A hub $w$ in the label set $\bigcup_{v \in N(u)} L^{PSPC}_{d-1}(v)$ is not a hub of $u$ if $r(w) < r(u)$.
\end{lemma}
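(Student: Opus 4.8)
The plan is to prove Lemma~\ref{lemma:prune1} by leveraging the characterization of canonical hubs from Theorem~\ref{theo:label_property} together with the total order $\leq$ that governs the entire index construction. The key observation is that the lemma is essentially a statement about rank: a candidate hub $w$ gathered from the neighbors' label sets $\bigcup_{v \in N(u)} L^{PSPC}_{d-1}(v)$ cannot become a genuine hub of $u$ if $w$ is ranked lower than $u$ (i.e., $r(w) < r(u)$). I would first recall that by Theorem~\ref{theo:equal}, $L^{PSPC} = L^{SPC}$, so it suffices to reason about when $w$ can be a legitimate hub of $u$ in the ESPC index, and by Theorem~\ref{theo:label_property} this happens precisely when $w$ is the highest-ranked vertex along all shortest paths from $u$ to $w$.

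First I would suppose, for contradiction, that $w$ is in fact a hub of $u$ in $L^{PSPC}_d(u)$ while $r(w) < r(u)$. By Theorem~\ref{theo:label_property}, $w$ being a hub of $u$ forces $w$ to be the highest-ranked vertex on every shortest path between $u$ and $w$. But $u$ itself is an endpoint of any such shortest path, so $u \in Q_{u,w}$, and therefore $w$'s rank must be at least as high as $u$'s rank, i.e., $r(w) \geq r(u)$. Under the convention stated in the excerpt that $w \leq v$ means $w$ has a higher rank than $v$, the hypothesis $r(w) < r(u)$ says $w$ is ranked strictly lower than $u$, which directly contradicts the requirement $r(w) \geq r(u)$. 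Hence $w$ cannot be a hub of $u$, establishing the lemma.

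The second step is to connect the propagated candidate set to this rank argument cleanly. Since the candidate label $w$ arrives in $L^{PSPC}_d(u)$ only via propagation from some neighbor $v \in N(u)$ where $w$ was already a hub in $L^{PSPC}_{d-1}(v)$, I would note that the pruning test $r(w) < r(u)$ can be applied locally and deterministically at $u$ without consulting any global state — this is exactly what makes the rule safe for parallel construction. The argument is monotone in the sense that once a vertex is known to be ranked below $u$, it can never satisfy the canonical-hub condition for $u$ regardless of distance $d$, so pruning it at this stage discards no valid label.

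The main obstacle I anticipate is being careful about the direction of the order convention: the excerpt defines $w \leq v$ to mean $w$ is \emph{higher} ranked, while $r(\cdot)$ is used as a rank function whose larger values appear to denote higher priority in the construction (the highest-ranked vertex is the one retained as a hub). I would state the convention explicitly at the start of the proof to avoid sign confusion, and then the contradiction follows immediately from the fact that $u$ lies on every $u$-to-$w$ shortest path. No heavy computation is required; the entire weight of the proof rests on correctly invoking the highest-ranked-vertex characterization of Theorem~\ref{theo:label_property} and matching it against the rank hypothesis.
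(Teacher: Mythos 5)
The paper never actually proves Lemma~\ref{lemma:prune1}: it is stated bare, because it is regarded as immediate from the ESPC construction, in which a label entry $(w, C_{v,w})$ is only ever placed in $T(v)$ for $w \leq v$, i.e., every hub outranks the vertex that carries it. Your core move --- $u$ is an endpoint of any shortest $u$-to-$w$ path, so a hub condition that forces $w$ to top such a path forces $r(w) \geq r(u)$, contradicting $r(w) < r(u)$ --- is exactly this construction-level fact made explicit, and your observation that the test is local to $u$ and monotone in $d$ is a correct and useful remark about why the rule is safe in parallel.

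There is, however, one step that fails as written: you invoke Theorem~\ref{theo:label_property} to conclude that \emph{any} hub $w$ of $u$ must be the highest-ranked vertex on \emph{every} shortest path between $u$ and $w$. That theorem is an if-and-only-if characterization of \emph{canonical} hubs ($L^{c}$) only, whereas the label set relevant to the lemma is $L^{SPC} = L^{c} \cup L^{nc}$, and the propagation function of Equation~\eqref{equa:PSPCdu} does generate non-canonical entries (the proof of the companion pruning lemma explicitly adds newly found paths to $L^{nc}$). A non-canonical hub $w$ of $u$ is precisely one for which some, but not all, shortest $u$--$w$ paths are trough paths, so ``highest-ranked on all shortest paths'' is false for it, and your contradiction as set up rules out only the canonical entries. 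The repair is one line: by the ESPC construction, every hub entry $(w, C_{u,w})$ requires $C_{u,w} \neq \emptyset$, i.e., at least one \emph{trough} shortest path from $u$ to $w$, and on a trough path the endpoint $w$ outranks all other vertices on the path, including the other endpoint $u$; hence $r(w) > r(u)$ for every hub $w \neq u$ of $u$, canonical or not, which is the contrapositive of the lemma. With ``every shortest path'' weakened to ``at least one trough shortest path,'' your argument becomes complete and covers the full label set.
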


\begin{lemma}
A hub $w$ in the label set $\bigcup_{v \in N(u)}L^{PSPC}_{d-1}(v)$ is not a hub of $u$ in $L^{PSPC}_{d}(v)$ if $Query(w, u, L^{PSPC}_{\leq d}) = (d_0, c_0)$ and $d_0 < d$.
\end{lemma}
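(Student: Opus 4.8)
The plan is to reduce the claim to a statement purely about distances. By Definition~\ref{def:DSLS}, every entry of $L^{PSPC}_{d}(u)$ has the form $(w, dis(w,u), c)$ with $dis(w,u) = d$; equivalently, a vertex $w$ can be a hub of $u$ in $L^{PSPC}_{d}(u)$ only if its true shortest distance to $u$ equals exactly $d$. Hence it suffices to show that the pruning hypothesis $Query(w, u, L^{PSPC}_{\leq d}) = (d_0, c_0)$ with $d_0 < d$ forces $dis(w,u) \leq d_0 < d$, so that $dis(w,u) \neq d$ and $w$ cannot reside in $L^{PSPC}_{d}(u)$.

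The key step is to argue that this partial query is a sound upper bound on the true distance. First I would unfold the query via Equations~\eqref{con:spc1} and~\eqref{con:spc2}: $d_0$ is the minimum of $sd(w,h) + sd(h,u)$ taken over the common hubs $h \in L^{PSPC}_{\leq d}(w) \cap L^{PSPC}_{\leq d}(u)$. Because $L^{PSPC}_{\leq d}$ is a subset of the exact index (Theorem~\ref{theo:equal}), every stored value $sd(w,h)$ and $sd(h,u)$ is a true shortest distance realized by an actual shortest path. Concatenating a shortest $w$-to-$h$ path with a shortest $h$-to-$u$ path yields a genuine $w$-to-$u$ walk of length $sd(w,h) + sd(h,u)$, so $dis(w,u) \leq sd(w,h) + sd(h,u)$ for each common hub; minimizing over $h$ gives $dis(w,u) \leq d_0$.

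Combining the two steps closes the argument: $dis(w,u) \leq d_0 < d$, hence $dis(w,u) < d$, and by Definition~\ref{def:DSLS} a hub at level $d$ must have true distance exactly $d$, a contradiction. An alternative, equivalent finish would invoke Theorem~\ref{theo:label_property} directly: were $w$ a genuine level-$d$ hub of $u$, then $d = dis(w,u)$ would be the shortest distance, contradicting the witnessed path of length $d_0 < d$ exhibited by the common hub $h$. Either way, $w$ is discarded from the candidate set $\bigcup_{v \in N(u)} L^{PSPC}_{d-1}(v)$ when building $L^{PSPC}_{d}(u)$.

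The main obstacle is the soundness of querying against the \emph{partial} label set $L^{PSPC}_{\leq d}$ rather than the full index: one must be sure that the minimum it reports never underestimates $dis(w,u)$. This is precisely the concatenation argument above, and notably it requires only soundness of the partial index (stored distances are realizable), not its completeness at distance $d$, so no circularity with the ongoing construction of $L^{PSPC}_{d}$ arises. A secondary point to state carefully is scope: the lemma forbids $w$ only at level exactly $d$; $w$ may still legitimately be a hub of $u$ at the smaller level $dis(w,u)$, and indeed it will already have been recorded there in an earlier iteration, which keeps the pruning consistent with the exactness guarantee of the ESPC index.
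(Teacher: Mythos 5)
Your proof is correct and takes essentially the same route as the paper's: both reduce the claim to showing that the hypothesis $d_0 < d$ forces $dis(w,u) \neq d$, so that $w$ cannot appear in $L^{PSPC}_{d}(u)$ by Definition~\ref{def:DSLS}, and your concatenation argument simply makes explicit the query-soundness step ($dis(w,u) \leq d_0$, since every common-hub sum $sd(w,h)+sd(h,u)$ is a realizable walk length) that the paper asserts without detail. The paper's proof additionally analyzes the case $Query(w,u,L^{PSPC}_{\leq d}) = d$ (separating $dis(w,u) < d$ from the tie case routed into $L^{nc}$), but that case lies outside the lemma's stated hypothesis $d_0 < d$, so your omission of it is harmless.
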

\begin{proof}
If $Query(w, u, L^{PSPC}_{\leq d}) = (d_0, c_0)$, and $d_0 < d$, then $dis(w, u) \neq d$, $w$ is not a hub of $u$ with distance $dis(w, u) = d$. If $Query(w, u, L^{PSPC}_{\leq d}) = d$, two situations are discussed:
\begin{enumerate}
    \item $dis(w, u) < d$, $w$ is not a hub of $u$ with distance $d$.
    \item $dis(w, u) = d$, there is a node $z$ on the shortest path between $w$ and $u$ with $r(z) > r(w)$. According to Theorem~\ref{theo:label_property}, $w$ is not a hub of $u$ in $L^{c}$. Then, these newly found shortest paths could be added into $L^{nc}$.
\end{enumerate}
Therefore, $w$ is not a hub of $u$ if $Query(w, u, L^{PSPC}_{\leq d}) = (d_0, c_0)$ and $d_0 < d$.
\end{proof}

Based on the above pruning rules, the label propagation function is proposed to find the exact $L^{PSPC}_{d}(u), \forall u \in V$.

$C_{d}(v)$ denote the set of hub nodes in label set $L^{PSPC}_{d}(v)$, for $\forall v \in V$ and $d \in [1, D+1]$.

\begin{definition}[\rrev{Label Propagation Function}]
\rrev{$L^{PSPC}_{d}(u) = \bigcup_{w \in C_{d-1}(v), for \forall v \in N(u)} L^{PSPC}_{d}(u, w)$
where $L^{PSPC}_{d}(u) =$}
\begin{equation}
\label{equa:PSPCdu}
\rrev{    \left\{
    \begin{aligned}
    &\emptyset \ \     if \ r(w) < r(u) \ or \\ &Query(w, u, L^{PSPC}_{\leq d}) = (d_0, c_0), \ and \ d_0 < d;                    \\
    &(w, dis(w, u)) \  \ \  otherwise      \\
    \end{aligned}
    \right
    .}
\end{equation}
\end{definition}
\begin{proof}
$L'$ denote the label set computed from Equation~\ref{equa:PSPCdu}. It is shown that $L' = L^{PSPC}_{d}(u)$ in two directions. Due to the correctness of Lemma~\ref{lemma:proga}, and the pruning conditions, the label set $L^{PSPC}_{d}(u) \subset L'$. The following parts demonstrate $L' \subset L^{PSPC}_{d}(u)$. Let $(w, dis(w,u)$ be a label in $L'$. Equation~\ref{equa:PSPCdu} shows that $r(w) > r(u)$ and $Query(w, u, L^{PSPC}_{\leq d}) \geq d$.

If in $L^{c}$, $w$ is not a hub of $u$, then according to Theorem~\ref{theo:label_property}, there exists a node $s$ that in $S$ $-$ the set of all nodes in the shortest path between $w$ and $u$ $-$ with $r(s) > r(w) > r(u)$. Therefore, $dis(w, s), dis(s, u) < d$ and $dis(w, u) \leq d$, and consequently, $Query(w, u, L^{PSPC}_{\leq d}) < d$, this leads to the contradiction.

Therefore, $w$ is a hub of $u$ in $L^{SPC}$. Additionally, if $dis(w, u) < d$, $Query(w, u, L^{PSPC}_{\leq d}) = (d_0, c_0)$ and $d_0 < d$, there is a contradiction. Thus, $dis(w, u) = d$. Now, it has been proved that $w$ is a hub of $u$ in $L^{SPC}$ with $dis(w, u) = d$, i.e., $w$ is a hub of $u$ in $L^{PSPC}_{d}(u)$ which completes the proof.
\end{proof}

\subsection{Propagation Paradigms}
\label{subsect:para}
There are two paradigms for label propagation. The first one is \textit{Push-Based} paradigm, whereas the second one is \textit{Pull-Based} paradigm. Then, the benefits and drawbacks are discussed.

%\noindent \textbf{Push-Based Label Propagation}. 
\begin{definition}[Push-Based Label Propagation]
In $i^{th}$ iteration, vertex $v$ propagates its label entries to all of its out-neighbors. 
\end{definition}

The Algorithm is illustrated in the following manner:
Line~\ref{line:push_parallel} tackles all the vertices in parallel. For each vertex $v$ $\in V$, it pushes its in-neighbors in Line~\ref{line:push_in}. Followed, Line~\ref{line:insert_cans} inserts all the candidates' hubs to $\mathcal{C}(u)$. Line~\ref{line:push_remove_dup} eliminates the duplicate candidates. 
Line~\ref{line:push_cands} traverses each element in the $\mathcal{C}(u)$ recursively. Then, two pruning conditions are validated in Lines~\ref{line:push_prune1} and~\ref{line:push_prune2}, respectively. If not pruned, this index entry would be inserted into $L^{SPC}_{d}(u)$ in Line~\ref{line:push_insert_index}.

\begin{algorithm}[htb]
\vspace{-1mm}

%\setstretch{0.95}
%\DontPrintSemicolon
%\Begin{
\For{{\bf each} $u \in V$ in parallel }{
\label{line:push_parallel}
\For{{\bf each} $v_k \in G_{out}(u)$}
    {
        \label{line:push_in}
        $\mathcal{C}(v_k)$ $\leftarrow$ hubs $\in L^{SPC}_{d-1}[u]$ \;
    }
}
\For{{\bf each} $u \in V$ in parallel }{
\label{line:push_parallel2}
        Remove the duplicates in $\mathcal{C}$ \;
        \label{line:push_remove_dup}
        \For{{\bf each} $c \in \mathcal{C}(u)$}
        { 
                \label{line:push_cands}
            \If{r($c$.v) $<$ r(u)}
            {
            \label{line:push_prune1}
            continue\;
            }
            \If{Query($c.v, u, L^{SPC}_{d}$) $< $ d}
            {
                \label{line:push_prune2}
                continue\;
            }
            Insert($c$) into $L^{SPC}_{d}(u)$\;
            \label{line:push_insert_index}
        }
    
}
    % $hub_{A} \leftarrow$ hubs from $L_{in}(a)$\;
    % $hub_{B} \leftarrow$ hubs from $L_{out}(b)$\;
    % $V_{k} \leftarrow hub_{A} \cup hub_{B}$\;
    % \For{{\bf each} $v_{k}\in V_{k}$ {\rm in descending order}}{
    %     \If{$v_{k} \in hub_{A}$ {\bf and} $v_{k} \prec b$}{
    %         %$D_{G}(v_{k},a) \leftarrow d$ of $(v_{k},d,c)$ from $L_{in}(a)$\;
    %         $(v_{k},d,c) \leftarrow$ label with hub $v_{k}$ from $L_{in}(a)$\;
    %         {\textsc{Forward}PASS}($v_{k},b,d+1,c$)\;
    %     }
    %     \If{$v_{k} \in hub_{B}$ {\bf and} $v_{k} \prec a$}{
    %         %$D_{\bar{G}}(v_{k},b) \leftarrow d$ of $(v_{k},d,c)$ from $L_{out}(b)$\;
    %         $(v_{k},d,c) \leftarrow$ label with hub $v_{k}$ from $L_{out}(b)$\;
    %         {\textsc{Backward}PASS}($v_{k},a,d+1,c$)\;
    %     }
    % }
%}
%\vspace{-2mm}
\caption{\textsc{PUSH}($G, L^{SPC}_{d-1}$)}
\vspace{-1mm}
\label{alg:inc-cnt}
\end{algorithm}

\begin{definition}[Pull-Based Label Propagation]
In $i^{th}$ iteration, vertex $v$ receives all the label entries from all of its in-neighbors.
\end{definition}

The details of Algorithm~\ref{alg:pull} are illustrated as follows:
Line~\ref{line:pull_parallel} tackles all the vertices in parallel. For each vertex $v$ $\in V$, it pulls its in-neighbors in Line~\ref{line:pull_in}. Followed, Line~\ref{line:insert_cans} inserts all the candidates' hubs to $\mathcal{C}(u)$. Line~\ref{line:remove_dup} removes the duplicate candidates. 
Line~\ref{line:pull_cands} traverses each element in the $\mathcal{C}(u)$ recursively. Then, in Lines~\ref{line:pull_prune1} and~\ref{line:pull_prune2}, two pruning conditions are validated, respectively. This index entry would be inserted into $L^{SPC}_{d}(u)$ in Line~\ref{line:pull_insert_index} if not pruned.

\begin{algorithm}[htb]
\vspace{-1mm}

%\setstretch{0.95}
%\DontPrintSemicolon
%\Begin{

\For{{\bf each} $u \in V$ in parallel }{
\label{line:pull_parallel}
        \For{{\bf each} $v_k \in G_{in}(v_i)$}
        {
            \label{line:pull_in}
            $\mathcal{C}(u)$ $\leftarrow$ hubs $\in L^{SPC}_{d-1}[v_k]$ \;
            \label{line:insert_cans}
        }
        Remove the duplicates in $\mathcal{C}$ \;
        \label{line:remove_dup}
        \For{{\bf each} $c \in \mathcal{C}(u)$}
        { 
        \label{line:pull_cands}
            \If{r($c$.v) $<$ r(u)}
            {
                \label{line:pull_prune1}
            continue\;
            }
            \If{Query($c.v, u, L^{SPC}_{d}$) $< $ d}
            {
                \label{line:pull_prune2}
                continue\;
            }
            Insert($c$) into $L^{SPC}_{d}(u)$\;
            \label{line:pull_insert_index}
        }
}
    % $hub_{A} \leftarrow$ hubs from $L_{in}(a)$\;
    % $hub_{B} \leftarrow$ hubs from $L_{out}(b)$\;
    % $V_{k} \leftarrow hub_{A} \cup hub_{B}$\;
    % \For{{\bf each} $v_{k}\in V_{k}$ {\rm in descending order}}{
    %     \If{$v_{k} \in hub_{A}$ {\bf and} $v_{k} \prec b$}{
    %         %$D_{G}(v_{k},a) \leftarrow d$ of $(v_{k},d,c)$ from $L_{in}(a)$\;
    %         $(v_{k},d,c) \leftarrow$ label with hub $v_{k}$ from $L_{in}(a)$\;
    %         {\textsc{Forward}PASS}($v_{k},b,d+1,c$)\;
    %     }
    %     \If{$v_{k} \in hub_{B}$ {\bf and} $v_{k} \prec a$}{
    %         %$D_{\bar{G}}(v_{k},b) \leftarrow d$ of $(v_{k},d,c)$ from $L_{out}(b)$\;
    %         $(v_{k},d,c) \leftarrow$ label with hub $v_{k}$ from $L_{out}(b)$\;
    %         {\textsc{Backward}PASS}($v_{k},a,d+1,c$)\;
    %     }
    % }
%}
%\vspace{-2mm}
\caption{\textsc{PULL}($G, L^{SPC}_{d-1}$)}
\vspace{-1mm}
\label{alg:pull}
\end{algorithm}

% \begin{figure}[htb]
% \vspace{-1mm}
% 	\centering
% 	\includegraphics[width=0.3\columnwidth]{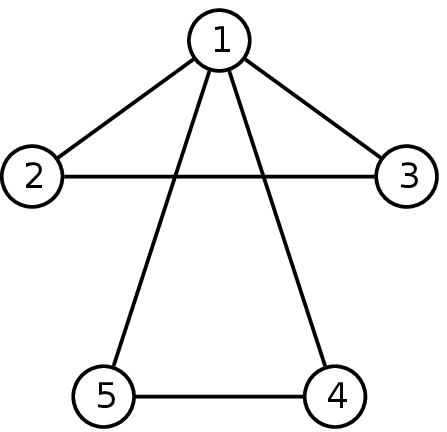}
% \vspace{-3mm}
% 	\caption{Graph H}
% 	\label{fig:pull_push_graph}
% \vspace{-1mm}
% \end{figure}

\begin{figure*}[ht]
\centering
    \subfigure[][{\scriptsize Original Graph $G$}]{
		{\includegraphics[width=0.20\linewidth]{./pull_push}}%0.45 0.45 for the original version
		\label{fig:pull_push_graph}}
    \subfigure[][{\scriptsize PULL}]{
		{\includegraphics[width=0.32\linewidth]{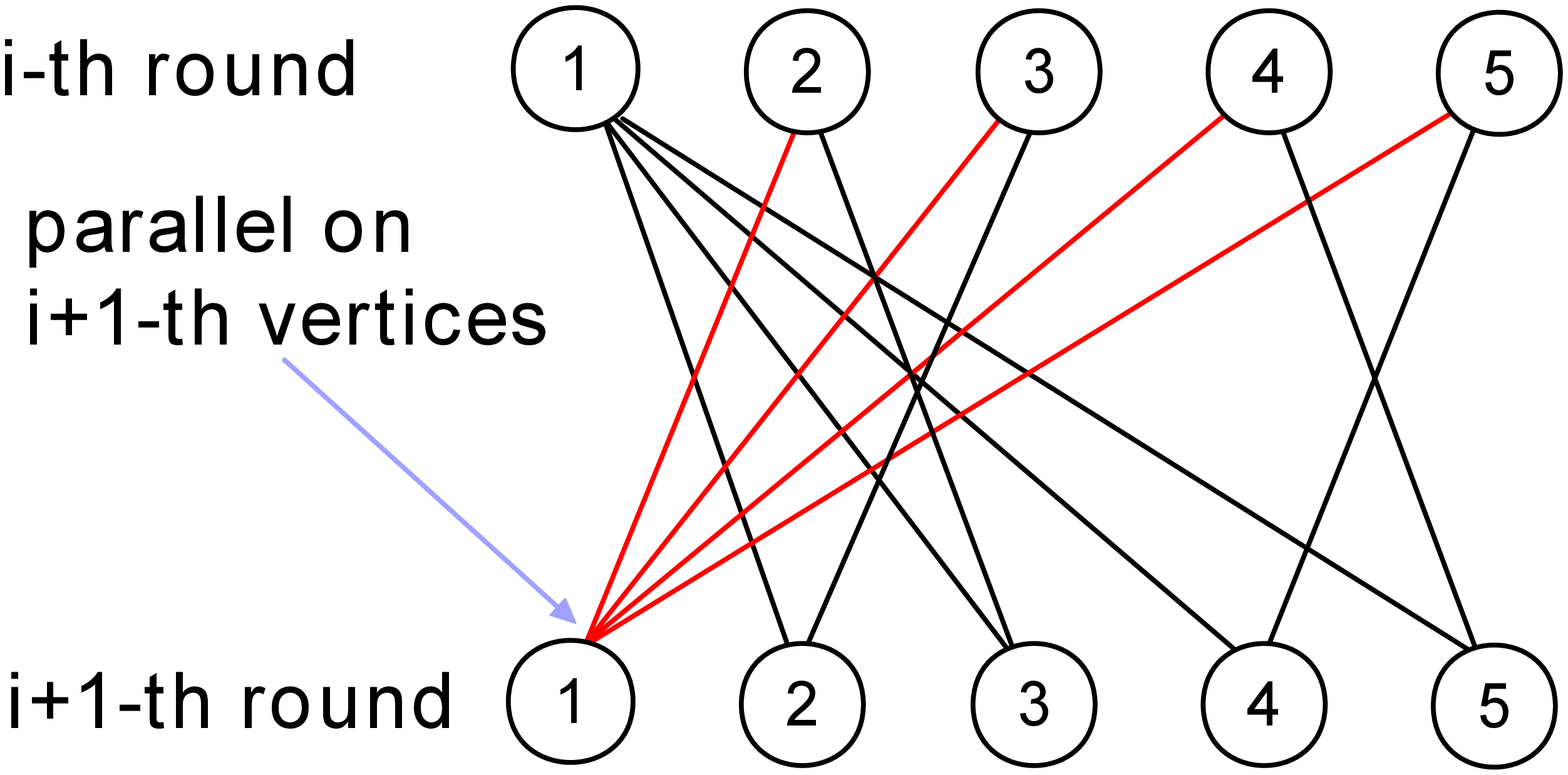}}%0.45 0.45 for the original version
		\label{fig:pull}}
	\subfigure[][{\scriptsize PUSH}]{
		{\includegraphics[width=0.32\linewidth]{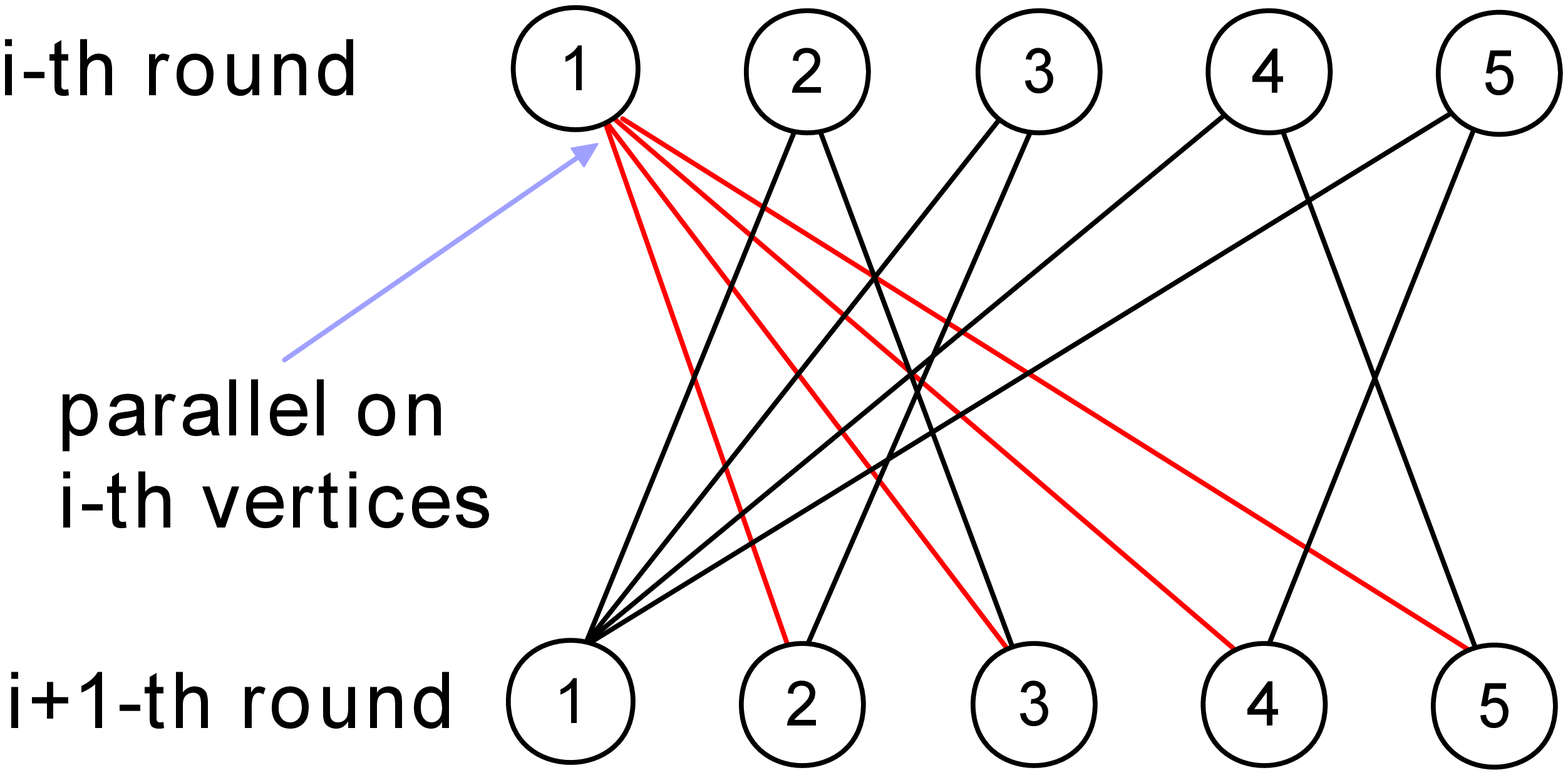}}%0.45 0.45 for the original
		\label{fig:push}}
	\caption{\small Pull-based and Push-based Paradigms.}
	\label{fig:pull_push}
\end{figure*}

\begin{example}
Figure~\ref{fig:pull_push} depicts an example of the difference between PULL-based and PUSH-based propagation paradigms. In the original graph $G$ in Figure~\ref{fig:pull_push_graph}, there are $5$ vertices and $6$ edges. In the $(i+1)$-th iteration of the index construction, Figure~\ref{fig:pull} indicates how pull-based method works. For instance, vertex $1$ received the $i$-th iteration's index entry from all of its neighbors (in-neighbors in directed graphs), i.e., vertices $2$, $3$, $4$, and $5$. If available, each vertex of the $(i+1)$-th iteration in graph~\ref{fig:pull} may be assigned with one thread. It is not necessary to allocate each edge to a single thread since there are several erroneous index entries that may be merged or removed to accelerate the process. Figure~\ref{fig:push} demonstrates how the push-based method runs. In contrast to the pull-based, the $i$-th iteration's vertex processing can be parallelize at the vertex level.
\end{example}

%\noindent \textbf{Pull-Based Label Propagation}.

\noindent \textbf{Candidates Elimination.}~In this part, the duplicate removal method is briefly discussed. The reason is that there are numerous duplicate candidates for one vertex. In shortest path counting, the counting number could be fairly large. Then, one vertex $v$ would receive all of its neighbors' label entries in a single iteration. In such a scenario, there could be multiple duplicate candidates. If they were not merged, the overall computation cost would be prohibitively expensive. Thus, this part investigates the method for eliminating duplicate candidates.

The PULL-based paradigm is employed to simplify. To reduce the potential label entries for each label entry $(v, d, c)$ received by vertex $u$, there are primarily two types of operations, named \textit{Label Elimination} and \textit{Label Merging}.

\rrev{Two pruning rules are proposed to speed up this index construction process.}
\begin{itemize}
    \item \rrev{\textbf{Label Elimination.}~For two label entries $L_1$ $(v_1, d_1, c_1)$ and $L_2$  $(v_2, d_2, c_2)$ for vertex $u$, $(v_1, d_1, c_1)$, if $v_1 = v_2 \land d_1 < d_2$, then $L_2$ is eliminated by $L_1$.}

    \item \rrev{\textbf{Label Merging.}~For two label entries $L_1$ $(v_1, d_1, c_1)$ and $L_2$  $(v_2, d_2, c_2)$ for vertex $u$, $(v_1, d_1, c_1)$, if $v_1 = v_2 \land d_1 = d_2$, then $L_1$ and $L_2$ could be merged into a $L_3$ $(v_1, d_1, c_1+c_2)$.}
\end{itemize}

\begin{lemma}
    \rrev{These two pruning rules \textit{Label Elimination} and \textit{Label Merging} do not affect the correctness of the parallel shortest path counting algorithm.}
\end{lemma}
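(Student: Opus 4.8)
The plan is to prove correctness by showing that, within each propagation step, \emph{Label Elimination} and \emph{Label Merging} preserve a single invariant: for every surviving hub $w$ of a vertex $u$, the consolidated entry records the true distance $dis(w,u)$ together with the count of trough shortest paths from $w$ to $u$. Since Theorem~\ref{theo:label_property} and Lemma~\ref{lemma:proga} already establish that the hub \emph{set} produced by propagation (after the rank test $r(w) \ge r(u)$ and the pruning $Query(w,u,L^{PSPC}_{\le d}) \ge d$) coincides with the intended $L^{PSPC}_d(u)$ of the Label Propagation Function, it suffices to verify that the two operations neither alter which hubs survive nor corrupt each surviving hub's distance--count pair. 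The argument proceeds by induction on the distance layer $d$, assuming the layers $L^{PSPC}_{<d}(\cdot)$ are already exact.

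First I would analyze \emph{Label Elimination}. When $u$ gathers candidates from its neighbors, the same hub $w$ may arrive through several neighbors, and a neighbor $v_k$ that is \emph{not} on a shortest path from $w$ contributes an entry whose distance overshoots $dis(w,u)$. For a fixed hub the minimum candidate distance equals $dis(w,u)$, because some neighbor on a shortest $w$--$u$ path reports $w$ at distance $dis(w,u)-1$, while no one-hop extension can yield a distance below $dis(w,u)$. Label Elimination retains exactly this minimum-distance entry and discards the strictly larger ones; the discarded entries correspond to non-shortest extensions and therefore carry no shortest path from $w$ to $u$, so removing them cannot change the ESPC cover. This step fixes the distance field to its true value.

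Next I would treat \emph{Label Merging}, which is where the count is made exact. Among the candidates for $w$ at the true distance $d=dis(w,u)$, each originates from a predecessor $v_k \in N(u)$ with $dis(w,v_k)=d-1$ lying on a shortest $w$--$u$ path, and carries the count $count(w,v_k)$ taken from $L^{PSPC}_{d-1}(v_k)$, which is exact by the induction hypothesis. Every trough shortest path from $w$ to $u$ has a unique final edge and hence a unique such predecessor, so these path families are pairwise disjoint and together exhaust the trough shortest paths, giving the recurrence
\begin{equation}
count(w,u) \;=\; \sum_{v_k \in N(u),\, dis(w,v_k)=d-1} count(w,v_k).
\end{equation}
Label Merging adds precisely these counts into a single entry $(w,d,\sum_k c_k)$, reproducing the right-hand side with no double counting and no omission; thus the count field is also exact.

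Finally I would close the induction. Elimination and Merging act only on entries sharing the same hub and never promote or delete a hub across the rank/$Query$ pruning, so by Theorem~\ref{theo:label_property} and Lemma~\ref{lemma:proga} the surviving hub set still matches $L^{PSPC}_d(u)$; combined with the now-correct distance and count per hub, every layer remains exact, and the query rules~\eqref{con:spc1}--\eqref{con:spc2} return the correct $spc(s,t)$ for each pair. I expect the main obstacle to be the count step: one must argue carefully that tagging a candidate with distance $d$ coincides with $v_k$ being a genuine shortest-path predecessor of $u$ with respect to $w$, and that the per-predecessor path families partition the trough shortest paths, so the merged sum neither double-counts shared prefixes nor drops paths that reach $u$ through several distinct neighbors.
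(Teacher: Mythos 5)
Your proof is correct, but it takes a genuinely different route from the paper's. The paper argues by contradiction via a short exchange argument, and only for \emph{Label Elimination}: assuming the elimination of $L_2$ $(v_2,d_2,c_2)$ by $L_1$ $(v_1,d_1,c_1)$ caused a necessary entry $L_o$ $(v_o,d_o,c_o)$ to be missed, it constructs a replacement entry $L_o'$ $(v_o,\, d_o-d_2+d_1,\, c_o')$ by swapping the $L_2$ segment for the shorter $L_1$ segment, contradicting the necessity of $L_o$; the \emph{Label Merging} case is only asserted to be ``similar'' and is never written out. You instead run a direct induction on the distance layer $d$ with the invariant that every surviving entry stores the exact pair $(dis(w,u), count(w,u))$: your elimination step --- non-minimal-distance entries for a hub contain no shortest $w$--$u$ paths and so contribute nothing to any ESPC cover, since the cover definition composes only shortest segments --- is the direct-form counterpart of the paper's exchange step, while your merging step supplies what the paper omits, namely the predecessor decomposition: each trough shortest path from $w$ to $u$ has a unique final edge, so the families indexed by neighbors $v_k$ with $dis(w,v_k)=d-1$ partition the trough shortest paths (here the rank test $r(w) > r(u)$ guarantees the one-edge extension of a trough prefix stays trough), yielding the exact recurrence $count(w,u)=\sum_{v_k} count(w,v_k)$ with no double counting. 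What each buys: the paper's argument is terse and fits the space limit; yours is more complete and makes explicit the counting invariant, which is the substantive content of the Merging case, where correctness of the counts and not merely of the hub set is at stake, and it also verifies compatibility with the rank and $Query$ prunings, which the paper's proof does not address. One small imprecision worth noting: in the layered propagation every candidate generated at iteration $d$ carries distance exactly $(d-1)+1=d$, so within a single iteration your ``overshooting'' candidates for a fixed hub do not arise; Elimination actually fires against same-hub entries recorded at earlier (smaller-distance) layers, or is subsumed by the $Query(w,u,L^{PSPC}_{\leq d})$ pruning. Read across layers, your minimum-distance claim is exactly right, so this does not damage the argument, but the phrasing should be adjusted accordingly.
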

\begin{proof}
\rrev{We simply prove its correctness by contradiction. Due to the space limit, we only prove the correctness of \textit{Label Elimination}, while the proof for \textit{Label Merging} is similar. Assume Label Elimination missed a label entry $L_o$ $(v_o, d_o, c_o)$, which is caused by the elimination of $L_2$  $(v_2, d_2, c_2)$. Thus, there must exist a label entry $L'_o$ $(v_o, d_o - d_2 + d_1, c'_o)$ by replacing the $L_2$ parts with $L_1$. This contradicts the fact that $L_o$ $(v_o, d_o, c_o)$ is necessary.}
\end{proof}

\noindent\rrev{\textbf{Time and Space Complexity.}~Since these two types of operations are based on $v_1 = v_2$, $(v, d, c)$ can be stored into map by using $v$ as the key. For the label entries with the same key, it is necessary to sort them and then conduct \textit{Label Elimination} and \textit{Label Merging}. Assume there are $\eta$ candidates, the worst-case time complexity would be $O(\eta \times log \eta)$. The worst case of $\eta$ is the number of vertices whose rank is lower than the current vertex.}

\subsection{Schedule Plan}
\label{sebsect:schedule}
A basic objective of parallel index construction is to balance workloads. This subsection investigates how to allocate tasks in the most equitable manner feasible.

\noindent \underline{\textit{Node-Order-Based Schedule}}.~Assume there are $n$ vertices and $t$ threads, $order[i]$ indicates the vertex id which ranks $i$. In our index construction algorithm, tasks are allocated according to the node order. In each iteration for distance $d$, the $t_i\in [0, t - 1]$  thread would cope with all the tasks for vertex whose order ranges in $[t_i \times \lfloor n / t \rfloor, (t_i + 1) \times \lfloor n / t \rfloor )$.

\begin{figure}[htb]
\vspace{-1mm}
	\centering
	\includegraphics[width=1\columnwidth]{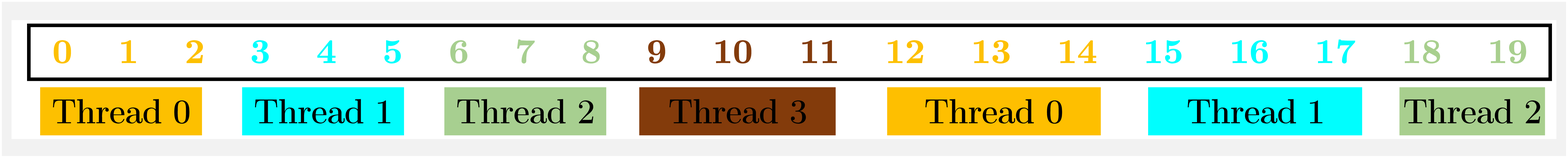}
\vspace{-1mm}
	\caption{Node-order based schedule.}
	\label{fig:node_schedule}
\vspace{-1mm}
\end{figure}

\begin{example}
\label{exm:node_order}
Figure~\ref{fig:node_schedule} illustrates an example of how the node-order based schedule works. There are $20$ vertices and $7$ threads. Consequently, threads $0$ would cope with vertices with order values in $(0, 1, 2)$. Such a schedule plan would result in an imbalance between vertices. For instance, in \pull, vertices with order $0$ would receive no candidates according to Lemma~\ref{lemma:prune1}.
\end{example}

\noindent \underline{\textit{Cost-Function-Based Dynamic Schedule}}.~Motivated by Example~\ref{exm:node_order}, such a schedule could cause an imbalance of tasks for threads. Therefore, this part proposes a dynamic cost-function-based schedule for improved scalability.

\begin{definition}
\label{def:cost_function}
$cost(v_i) = \sum_{v_j \in Nbr(v_i)} \{ |u| | u \in L_{v_j} \\
\ and \ r(v(u)) < r(v_i) \} $
\end{definition}

Since the precise cost function is expensive to calculate, an approximation-based approach is proposed for measuring the cost and the schedule.
In addition, rather than statically assigning each thread with an equal amount of tasks, it dynamically allocates tasks when tasks on it finish.

% \begin{example}
% \label{exm:dynamic}
% An example is given to illustrated for the Cost-function-Based Dynamic schedule.
% \end{example}

% \subsection{Query Optimization}
% \label{subsect:query}
% Since the Query function is mostly used in the whole process, this part investigates the query evaluation in this subsection.

\subsection{Vertex Ordering Strategies}
The vertex order is crucial for both \base \ and \our \ since it has a significant impact on indexing time, index size, and the query time. An efficient ordering scheme should rank vertices that cover more shortest paths higher s.t. later searches in \base \ can be reduced as early as possible, thereby reducing the number of label entries generated. 
In the literature, several heuristics for obtaining such orderings have been investigated. For the sake of completeness, two state-of-the-art schemes are reviewed, namely, degree-based and significant-path-based, below.

\noindent \underline{\textit{Degree-Based Scheme}}.~Vertices are arranged in ascending degree order.
This technique is based on the premise that vertices with a higher degree have stronger connections to many other vertices, and as a result, many shortest paths will pass through them.

\noindent \underline{\textit{Significant-Path-Based Scheme}}.~The significant-path-based method is more adaptable than the degree-based scheme, which utilizes only local information. Let $w_1$, $w_2$, ..., $w_n$ be the ordering generated under this scheme, where $w_i$ is the $i$-th hub to be pushed. Given $w_i$, the scheme determines $w_{i+1}$ as follows. When pushing hub $w_i$ in SPC, a partial shortest-path tree $T_{w_i}$ rooted at $w_i$ will be produced. For each vertex $v$ in $T_{w_{i}}$, let $des(v)$ be the number of descendants and par$(v)$ be the parent of $v$. Beginning with $w_i$, the scheme computes a significant path $P_{sig}$ to a leaf by iteratively selecting a child $v$ with the largest des($v$). $p_{sig}$ is intuitively a path that many shortest paths intersect. Then, among all vertices on $p_{sig}$ other than $w_i$, vertex $v$ with the largest deg($v$) $\cdot$ $(des(par(v)) - des(v))$ is empirically selected as $w_{i+1}$. $w_1$ is initially configured to have the highest degree in $G$. The Significant-Path-Based Scheme is the most efficient order in~\cite{zhang2020hub}. Nevertheless, as stated before, this ordering needs to compute the next vertex based on the current vertex's shortest path tree, which naturally includes a dependency on the index construction process and hence is not suited for parallel processing.
% Consequently, it proposes a revisited ordering based on the subsequent observations.

% \begin{observation}
% Degree-based ordering strategy is proven to be a highly useful scheme in variety of 2-hop based techniques~\cite{zhang2020hub,peng2020answering,akiba2013fast}, especially for the networks with small-world phenomenon. For other networks, e.g., road networks, however, there exist different ordering strategies with superior performance. 
% \end{observation}

% \begin{observation}
% \label{obs:tree}
% Dian et al.~\cite{ouyang2018hierarchy} showed that Vertex Hierarchy by Tree Decomposition method is appropriate for distance queries on road networks. Nevertheless, this technique also has a natural dependency structure.
% \end{observation}

% Thus, it is intriguing to consider if these observations still hold for the \spc problem.

% This part begins by introducing the Vertex Hierarchy by use of Tree Decomposition.

Halin~\cite{halin1976s}, and Robertson et al.~\cite{robertson1984graph} introduced tree decomposition, which is a technique for mapping a graph to a tree in order to expedite the resolution of certain computational problems in graphs.
\eat{Numerous NP-complete algorithmic problems, such as maximum independent set and Hamiltonian circuits, for arbitrary graphs, may be efficiently solved by dynamic programming for graphs of bounded treewidth, employing the tree-decompositions of these graphs.} Bodlaender's introductory survey can be found in~\cite{bodlaender1994tourist}. Vertices are naturally arranged in a hierarchy through tree decomposition. Therefore, this paper utilizes tree decomposition to create the vertex hierarchy, and demonstrate that the effectiveness of this hierarchy in answering shortest path counting queries in a road network. 
Given a graph $G(V, E)$, a tree decomposition of it is defined as follows~\cite{bodlaender1994tourist}:
\begin{definition}[Tree Decomposition] A tree decomposition of a graph $G(V,E)$, denoted as $T_G$, is a rooted tree in which each node $X \in V(T_G)$ is a subset of $V(G)$ (i.e., $X \subset V(G)$) such that the following three conditions hold:
\begin{itemize}
    \item $\bigcup_{X \in V(T_G)} X = V$;
    \item For every $(u,v) \in E(G)$, there exists $X \in V(T_G)$ s.t. $u \in X$ and $v \in X$.
    \item For every $v \in V(G)$ the set $\{ X | v \in X \}$ forms a connected subtree of $T_G$.
\end{itemize}
\end{definition}

\noindent \underline{\textit{Road Network Order}}.~In the road networks, the degree order is inefficient since there are two many low-degree vertices with the same degree. Thus, a Tree Decomposition method is proposed in~\cite{ouyang2018hierarchy} to address the shortest distance queries. For instance, there are $v_1,...,v_n$ vertices and a vertex order is demanded. This technique also has a natural dependency structure. The main steps of obtaining this order could be listed as follows:
\begin{itemize}
    \item Set $Q = \emptyset$ as a queue. In this first iteration, vertex $u_0$ with the lowest degree would be inserted into $Q$. Then, $u_0$ is removed from this graph, for vertex $u \in Neighbor(u_0)$, this step connects them and update their degree as $deg(u) + deg(u_0) - 1$. 
    \item Likewise, in the $i$-th iteration, the $u_i$ (the lowest degree vertex in the current graph) are pushed into the $Q$. Then, $u_i$ is removed from this graph, for vertex $u \in Neighbor(u_i)$, this step connects them and update their degree as $deg(u) + deg(u_i) - 1$.
    \item This step could produce a resultant vertex order by append vertices in $Q$ into the $R$ from the back of the queue to the front.
\end{itemize}

Then, the original graph could be divided into core- and fringe-parts, and employ a hybrid vertex ordering.

\underline{\textit{Hybrid Vertex Ordering}}.~Therefore, a hybrid vertex ordering is proposed, which compromises between the computational efficiency of degree vertex order and the index size effectiveness of the road network order. All vertices are divided into two categories, core-part, and fringe-part. To achieve this, a threshold $\delta$ is set for the node degree. If a vertex $v$, has a degree larger than $\delta$, then $v$ would be divided into core-part. Otherwise, it would be divided into the fringe-part. Each part would utilize the corresponding order.
% The main idea of this algorithm is listed as follows:

% \begin{itemize}
%     \item \textit{Classification}.~All vertices are divided into two categories, core-part, and fringe-part. To achieve this, a threshold $\delta$ is set for the node degree. If a vertex $v$, has a degree larger than $\delta$, then $v$ would be divided into core-part. Otherwise, it would be divided into the fringe-part.
%     \item \textit{Core-Part}.~As for the core-part vertices, it is observed that the computation cost would be very large if using the tree decomposition method. Thus, all these vertices are ranked according to the degree order.
%     \item \textit{Fringe-part}.~As for the fringe-part, in contrast to the core-part, the tree decomposition order is utilized to rank all these vertices.
%     \item \textit{Combinations}.~Then, these two categories of vertices are merged to establish a hybrid vertex order.
% \end{itemize}

\subsection{Landmark-Based Filtering}
\label{subsect:land_prune}
This subsection introduces landmark pruning. Landmark-based techniques are proved to be efficient in the Label Constrained Reachability problem~\cite{valstar2017landmark}. The overall idea is to choose small groupings of vertices as landmarks and construct some small indexes on them to accelerate the whole index construction process. This work follows the strategies in~\cite{valstar2017landmark} to select landmarks. To begin, the definition of landmarks is introduced.
\begin{definition}
\label{def:land}
(Landmark).~
Given a threshold $\theta$, a vertex $v$ is a landmark if and only if $degree(v) \geq \theta$.
\end{definition}

In the index construction process, the distances from landmarks are frequently used. Based on this observation, this work proposes a landmark-based filtering to further expedite the index construction process. 

In the classical shortest path counting with a 2-hop labeling index, vertices are explored individually. It processes vertex serially. For a vertex $u$, once it is explored, there is no need to utilize its distance information due to the pruning rule 1. Thus, there is no need to construct such a landmark index. 

Nevertheless, the facts alter for our parallel algorithm. Regarding the parallel algorithm, the label propagation is divided by distance iteration. Since the degree of the landmarks is quite high, the labels from the landmarks would be the majority in each iteration. Consequently, if the distance information is stored from these landmarks, these queries could be instantly answered if landmarks are involved.

Furthermore, since all the distances are in increasing order. One bit is needed to store the ``True'' or ``False'' information, and it is sufficient to answer queries for pruning.

% \noindent \textbf{Details}.~The details of the landmark labeling method is described as follows:

%\input{tech2}
\section{Index Size Reduction}
\label{sect:indexReduction}

Two index reduction techniques  proposed in~\cite{zhang2020hub} are analyzed in the context of parallel. They are $1$-shell reduction and neighborhood-equivalence reduction to reduce the index size.

\subsection{Reduction by $1$-Shell}
\label{subsect:1_shell}
Shortest path counting problem is simple in the tree: there is only one shortest path between any two vertices in it. Despite the fact that graphs are often far more complex than trees, trees are more common. Graph $G$ can always be transformed into a core-fringe structure. If it is not empty, it consists of trees, as will be shown following. Additionally, each of these trees connects to the remaining of $G$ with no more than one edge. Consequently, it is safe to trim $G's$ fringe without affecting the shortest paths inside the core. In this situation, the $1$-shell of $G$ is characterized by the fringe. Specifically, the $1$-shell of $G$ is defined as the maximum subgraph within which each vertex is incident to at least $k$ edges. The $k$-core of $G$ is defined as the largest subgraph within which each vertex is incident to at least $k$ edges.

% In trees, the number of shortest paths between any two vertices is exactly one. Thus, the SPC problem is trivial for trees. Although graphs are generally much more complicated than trees, it is always possible to decompose a graph $G$ into a core-fringe structure where the fringe, if no empty, consists of trees, as will be shown in the following. Better still, each such tree connects to the rest of $G$ via at most one edge. As a result, it is safe to cut the fringe from $G$ without breaking the shortest paths within the core. Here, the fringe is defined to be the $1$-shell of $G$. Specifically, the $1$-shell of $G$ is defined as the subgraph induced by the vertices that belong to the $1$-core but not to the $2$-core, where the $k$-core of $G$ is defined as the maximal subgraph within which each vertex is incident to at least $k$ edges.
During the index construction process of our parallel algorithm, the graph could be divided into a core-fringe structure. Regarding the fringe structure consisting of trees, they can be deleted from the original graph and utilize them during the query procedure. Therefore, the Reduction by $1$-Shell technique does not influence our parallel paradigm.

\noindent \textbf{Query Evaluation in Parallel.}~A query$(s,t)$ is processed in the following manner. If $shr(s) = shr(t)$, $1$ is directly returned; otherwise, a query$(shr(s),shr(t))$ is issued on $G(s)$ and its result is returned.

\subsection{Reduction by Equivalence Relation}
\label{subsect:reduce_er}
Given an undirected graph $G$ and two vertices $u$ and $v$, it is straightforward to demonstrate that $spc_G(u, w)$ $=$ $spc_G(v, w)$ for $\forall w \neq u, v$ if $u$ and $v$ share the same neighborhoods. In fact, a shortest path from $u$ to $w$ can be modified with a shortest path from $v$ to $w$ by substituting $u$ for $v$, and vice versa. In this situation, the label of $v$ alone would serve to answer any query involving $\{u, v\}$, and other vertices. In other words, it is permissible to eliminate the label of $u$, hence reducing the size of the index.
$u$ is neighborhood equivalent to $v$, indicated by $u \equiv v$, if $nbr(u)\ \{v\} = nbr(v) \ \{u\}$. In other words, if $u$ and $v$ are not adjacent, they must have the same set of neighbors; otherwise, the neighbors of $u$ and $v$ must be identical after eliminating $u$ and $v$.
%For example,in Figure $4b$, (i) $v_1 \equiv v_7$ since $nbr(v_1) \ \{v_7\} = nbr(v_7) \ \{v_1\}$ $= \{v_2, v_5\}$, and (ii) $v_4 \equiv v_8$ since $nbr(v_4) \ \{v_8\}$ $= nbr(v_8) \ \{v_4\}$ $= \{v_3, v_6\}$.
It is feasible to demonstrate that is an equivalence relation. This equivalence has been implemented in graph reduction tasks such as subgraph isomorphism~\cite{delling2014robust} and distance hub labeling~\cite{fan2012query}. As will be shown, the relation may also be utilized to reduce the graph size. However, straight application without adjustment might result in findings that are grossly underestimated.

During the index creation phase of our parallel approach, vertices with an Equivalence Relation are eliminated, leaving a single vertex to represent them. A weight is assigned to it depending on the quantity of equivalents.

\noindent \rrev{\textbf{Query Evaluation.}~Next, two query schemes are discussed for handling a query $(s, t)$. Assume w.l.o.g. that $s,t \in I$ and $s \neq t$. In this case, $Rs = nbr(s)$ and $Rt = nbr(t)$.}

\rrev{As for the query process, it could also speed up with a parallel framework. It could be parallel twofold: i) Since each query is independent of the other, it is natural to dynamically assign the query to the available thread. ii) Since the query is a set intersection, it could parallelize at the label entry granularity. Then, all the counting of the shortest path is summed up.}

\section{Experimental Results}
\label{sect:exp}

\begin{figure*}
    \centering
    {\includegraphics[width=1\linewidth]{./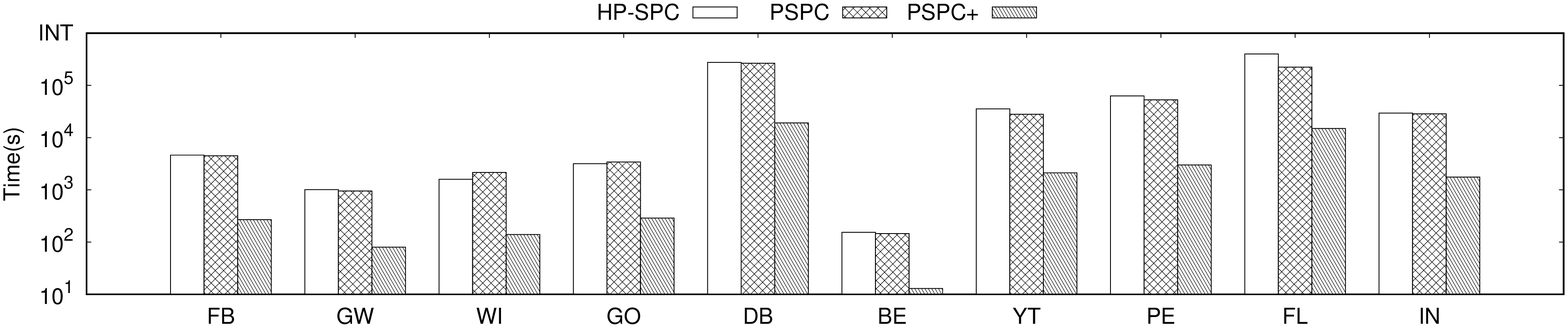}}
    \caption{Indexing Time (s) for \base, \our, and \ourp.}
    \label{fig:runtime}
\end{figure*}

\begin{figure*}
    \centering
    {\includegraphics[width=1\linewidth]{./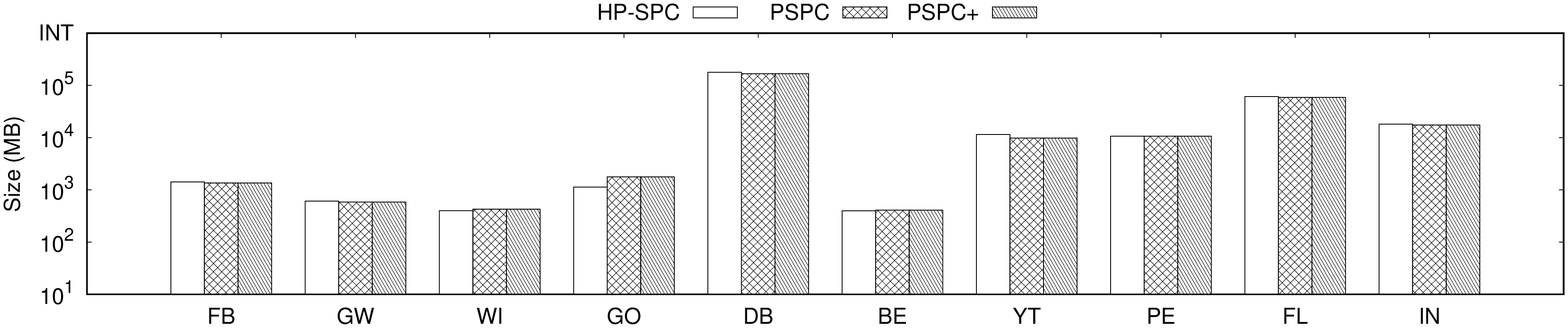}}
    \caption{Indexing Size (MB) for \base, \our, and \ourp.}
    \label{fig:indexsize}
\end{figure*}

\begin{figure*}
    \centering
    {\includegraphics[width=1\linewidth]{./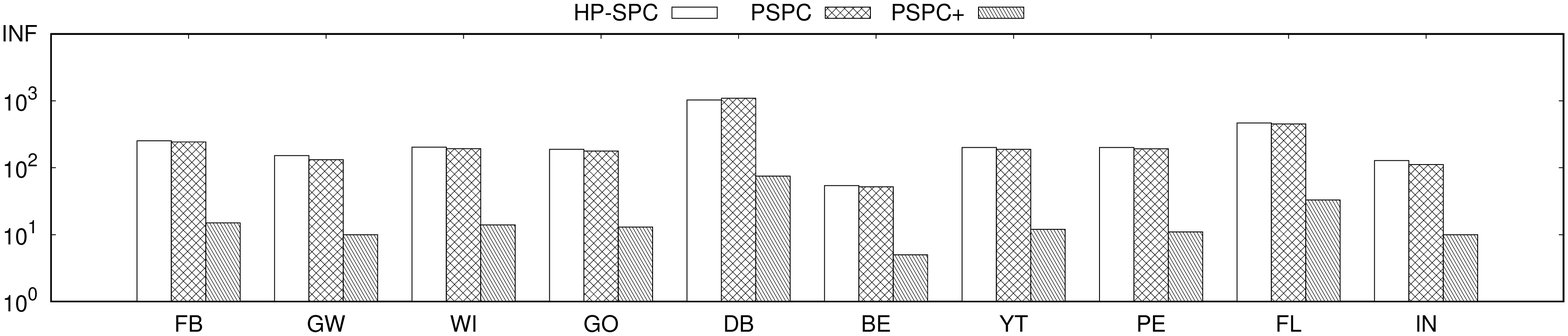}}
    \caption{Query Time (microsecond) for \base, \our, and \ourp.}
    \label{fig:querytime}
\end{figure*}

\begin{figure*}[htb]
	\vspace{-2mm}
	\newskip\subfigtoppskip \subfigtopskip = -0.1cm
	\newskip\subfigcapskip \subfigcapskip = -0.1cm
%     	\begin{minipage}[b]{\linewidth}
%		\centering
%		\includegraphics[width=0.5\linewidth]{figure/main_graph_keys.eps}%
%	\end{minipage}	
	\centering
     \subfigure[\small{FB}]{
    \includegraphics[width=0.22\linewidth]{./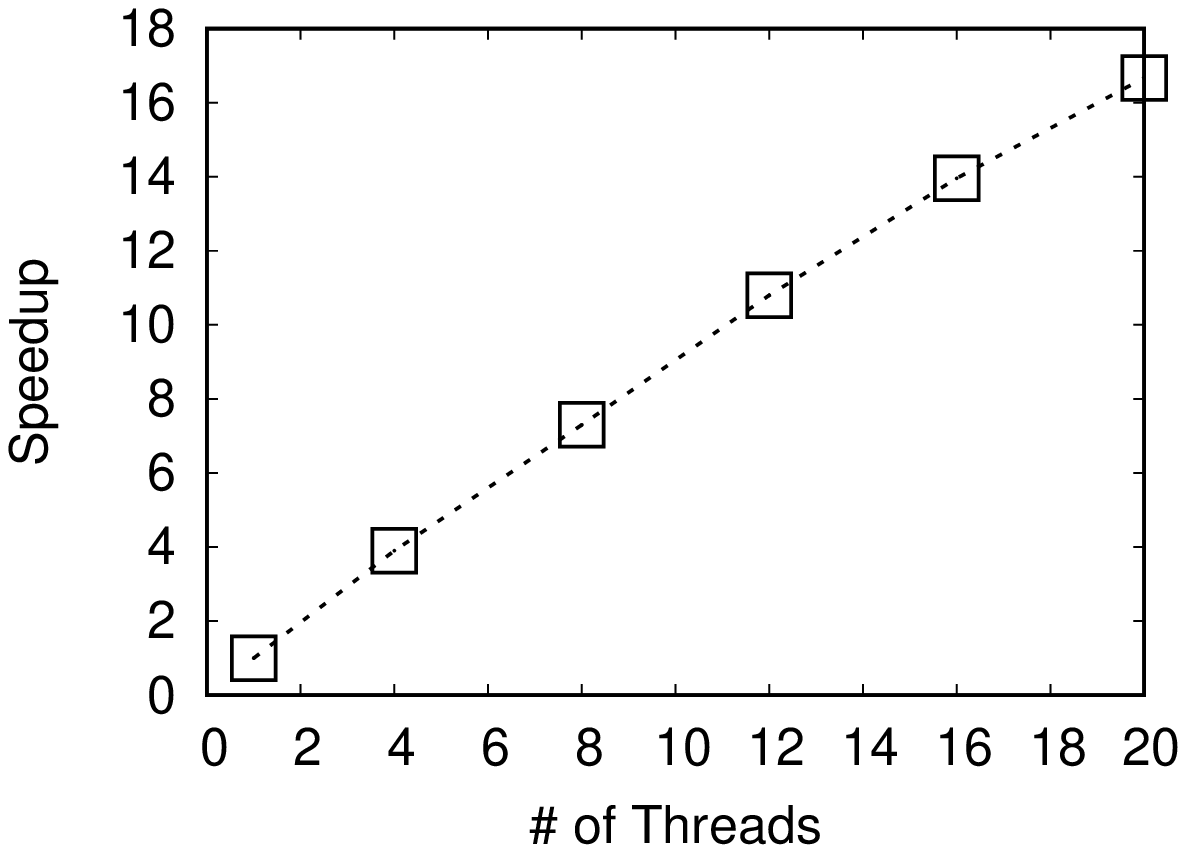}
    %\caption{Indexing Time When Tuning the $\#$ of Threads.}
    \label{fig:fb}
     }
     \subfigure[\small{GO}]{
    \includegraphics[width=0.22\linewidth]{./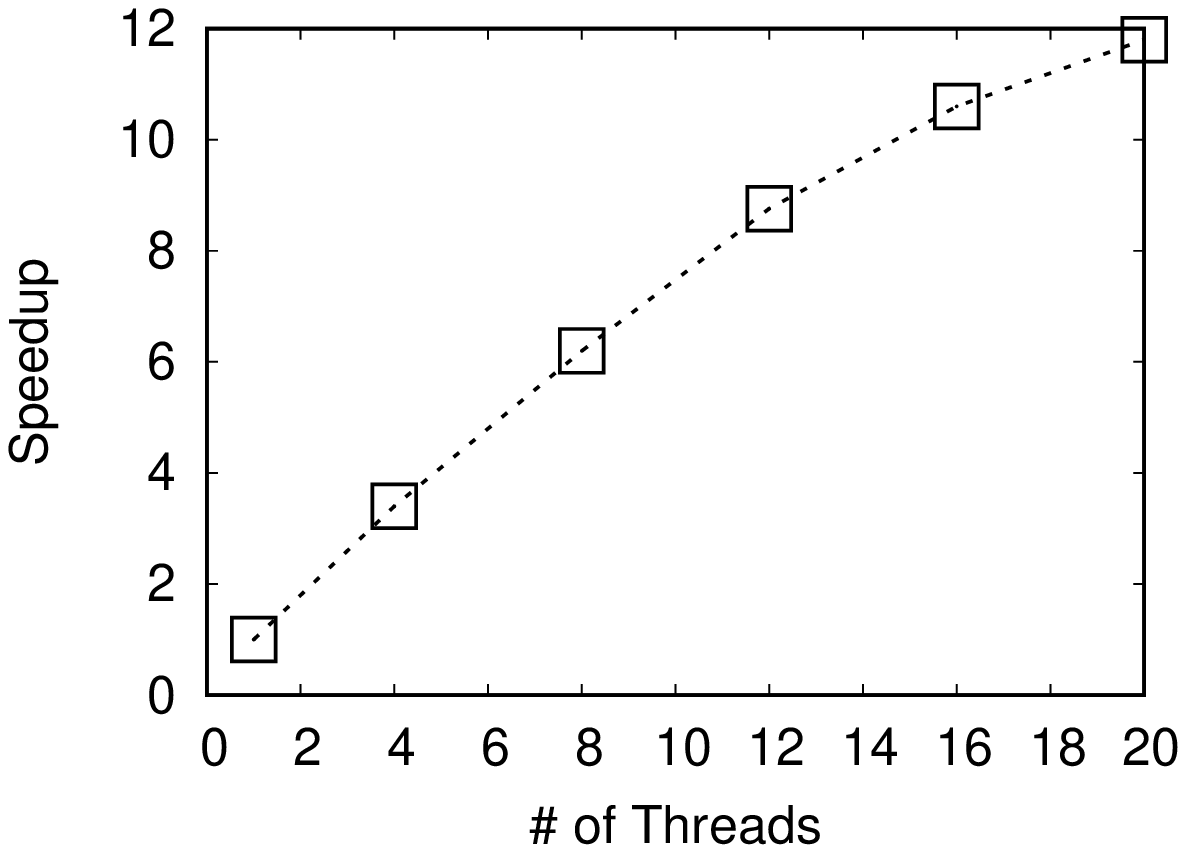}
    %\caption{Query Time When Tuning the $\#$ of Threads.}
    \label{fig:go}
     }
    \subfigure[\small{GW}]{
    \includegraphics[width=0.22\linewidth]{./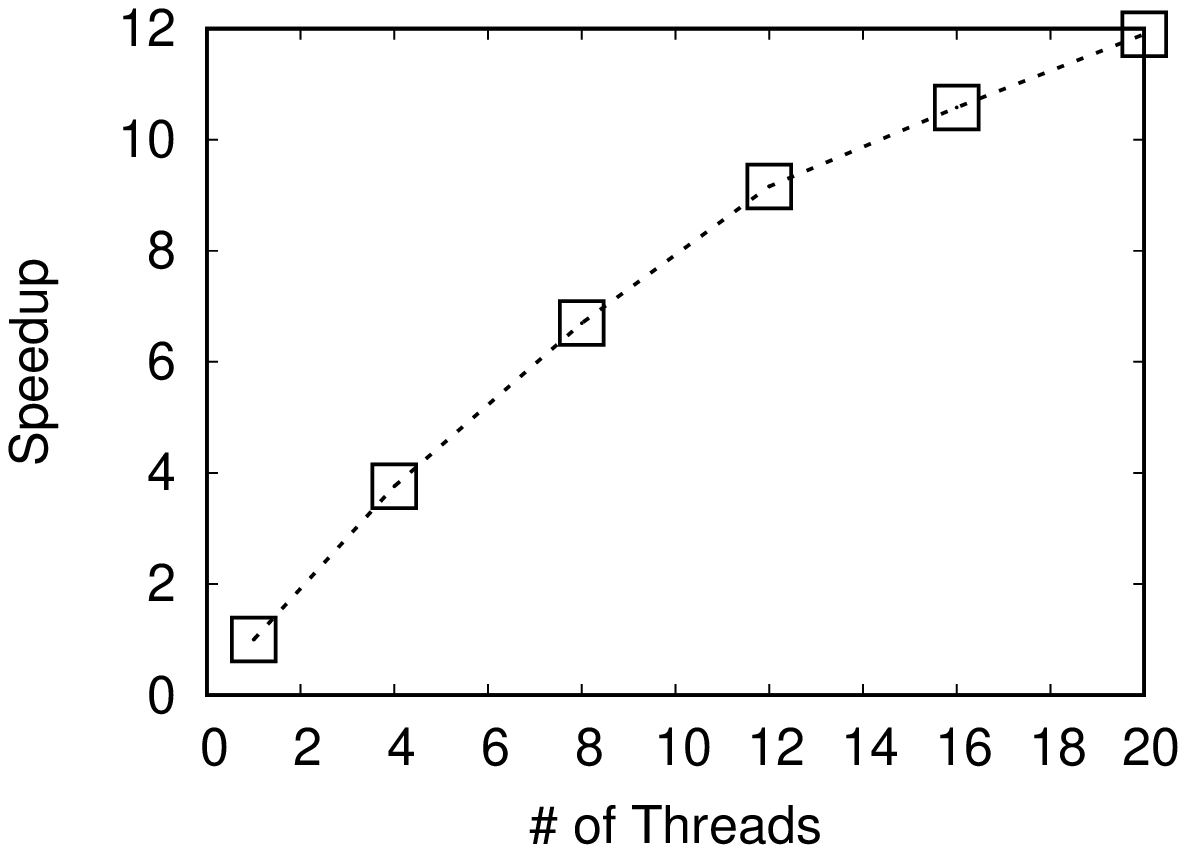}
    %\caption{Indexing Time When Tuning the $\#$ of Threads.}
    \label{fig:gw}
     }
     \subfigure[\small{WI}]{
    \includegraphics[width=0.22\linewidth]{./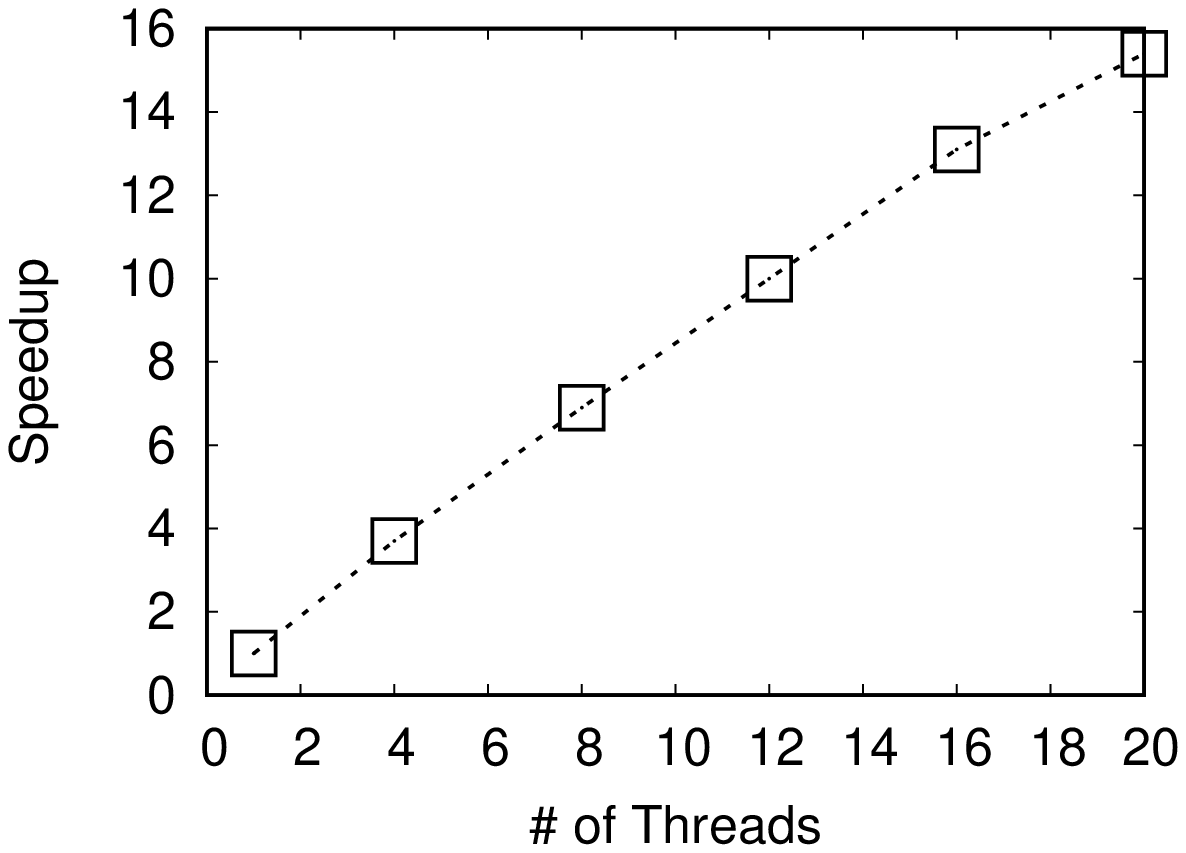}
    %\caption{Query Time When Tuning the $\#$ of Threads.}
    \label{fig:wi}
     }
%	\vspace{-3mm}
\caption{Speedup of indexing time when tuning the $\#$ of threads.}
%	\vspace{-3mm}
%\vspace{-0.3cm}
\label{fig:speedup_index_time}
\end{figure*}

\begin{figure*}[htb]
	\vspace{-2mm}
	\newskip\subfigtoppskip \subfigtopskip = -0.1cm
	\newskip\subfigcapskip \subfigcapskip = -0.1cm
%     	\begin{minipage}[b]{\linewidth}
%		\centering
%		\includegraphics[width=0.5\linewidth]{figure/main_graph_keys.eps}%
%	\end{minipage}	
	\centering
     \subfigure[\small{FB}]{
    \includegraphics[width=0.22\linewidth]{./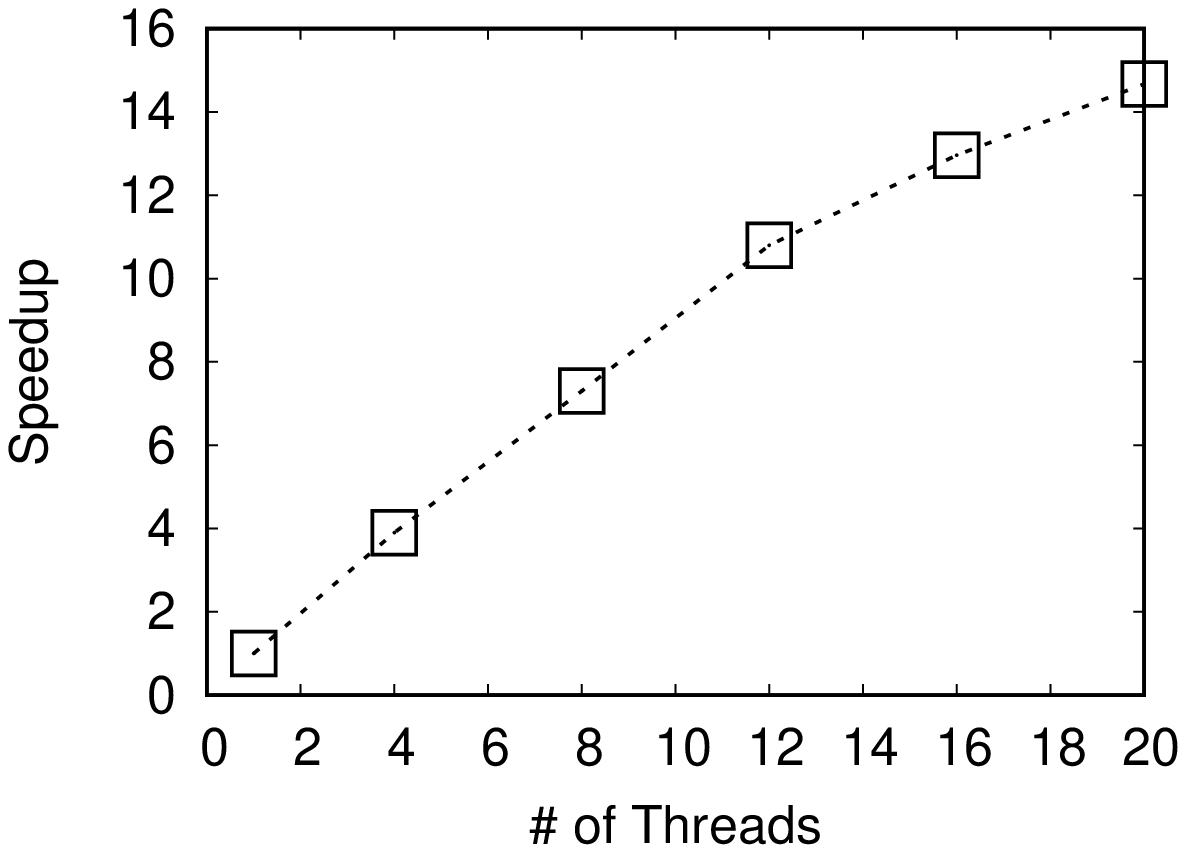}
    %\caption{Indexing Time When Tuning the $\#$ of Threads.}
    \label{fig:fb_qt}
     }
     \subfigure[\small{GO}]{
    \includegraphics[width=0.22\linewidth]{./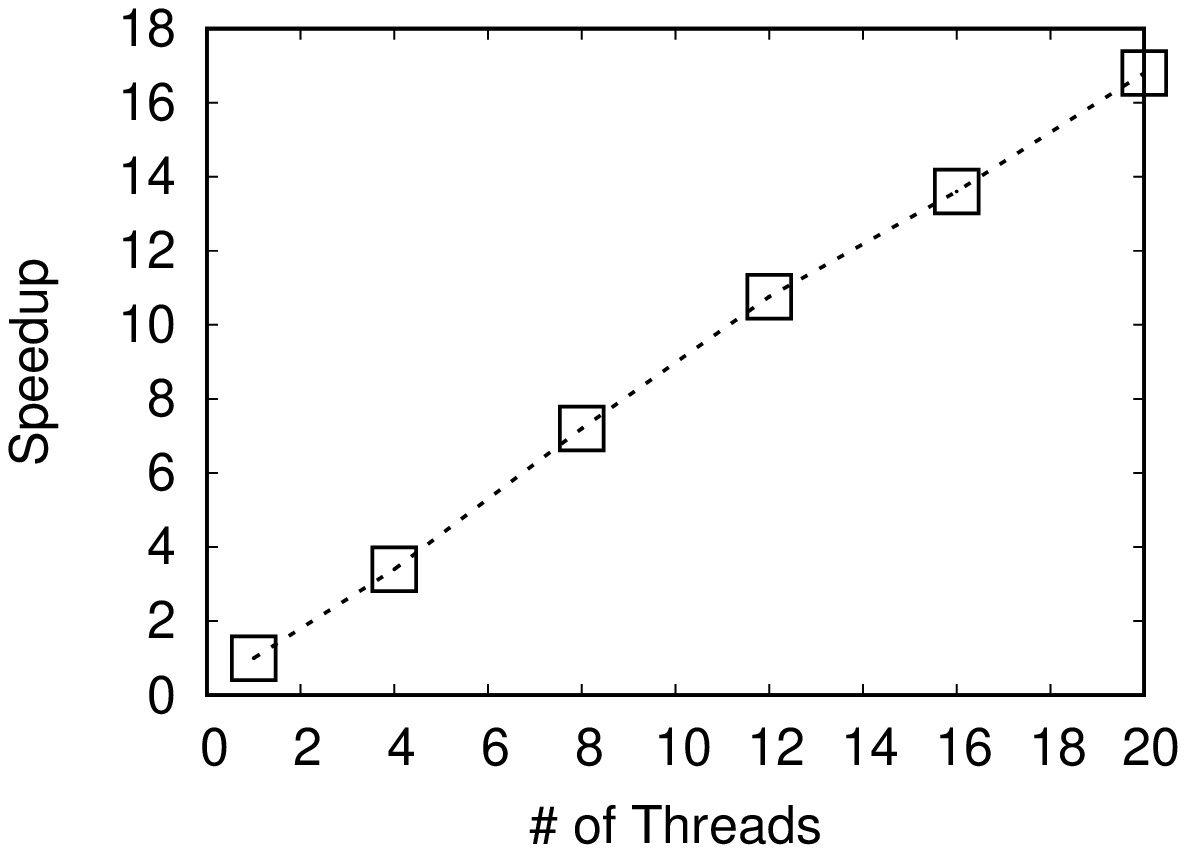}
    %\caption{Query Time When Tuning the $\#$ of Threads.}
    \label{fig:go_qt}
     }
    \subfigure[\small{GW}]{
    \includegraphics[width=0.22\linewidth]{./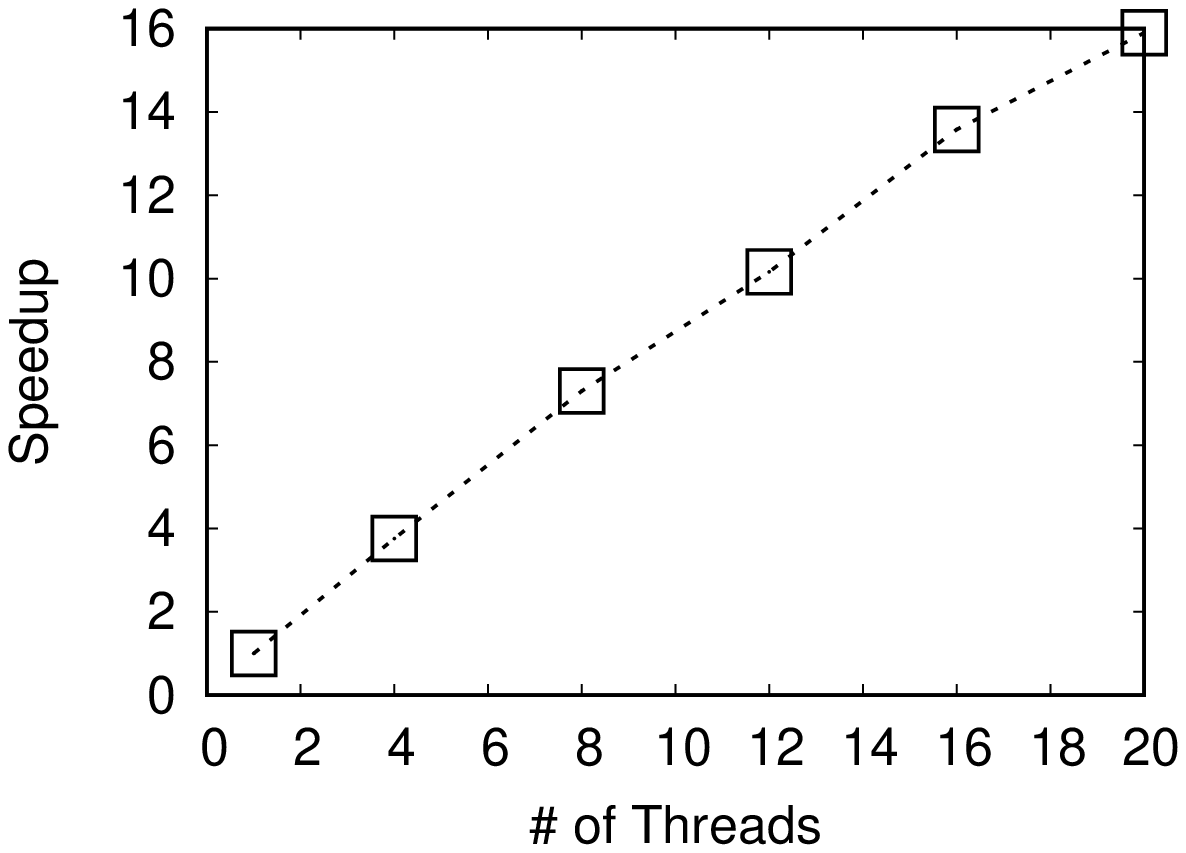}
    %\caption{Indexing Time When Tuning the $\#$ of Threads.}
    \label{fig:gw_qt}
     }
     \subfigure[\small{WI}]{
    \includegraphics[width=0.22\linewidth]{./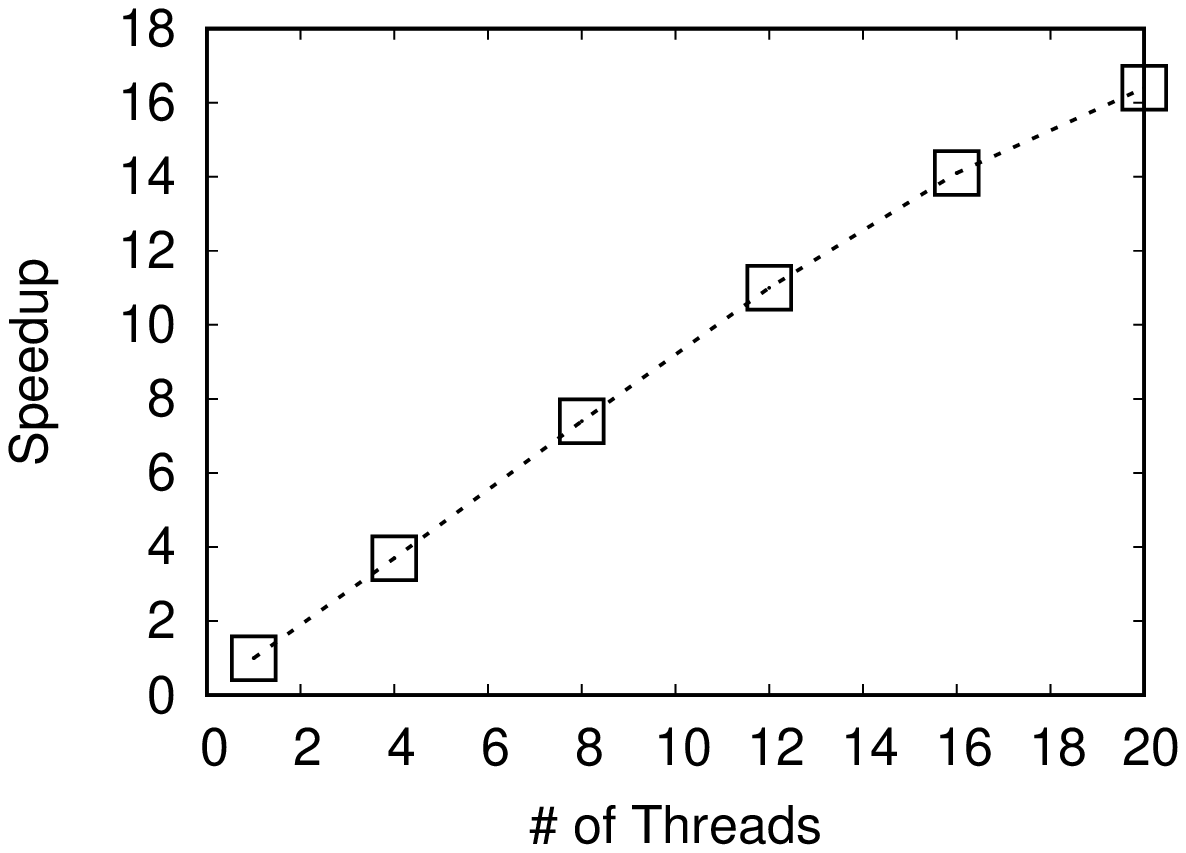}
    %\caption{Query Time When Tuning the $\#$ of Threads.}
    \label{fig:wi_qt}
     }
%	\vspace{-3mm}
\caption{Speedup of query time when tuning the $\#$ of threads.}
%	\vspace{-3mm}
%\vspace{-0.3cm}
\label{fig:speedup_query_time}
\end{figure*}

\begin{figure}[htb]
	\vspace{-2mm}
	\newskip\subfigtoppskip \subfigtopskip = -0.1cm
	\newskip\subfigcapskip \subfigcapskip = -0.1cm
%     	\begin{minipage}[b]{\linewidth}
%		\centering
%		\includegraphics[width=0.5\linewidth]{figure/main_graph_keys.eps}%
%	\end{minipage}	
	\centering
     \subfigure[\small{Landmark labeling}]{
    \includegraphics[width=0.46\linewidth]{./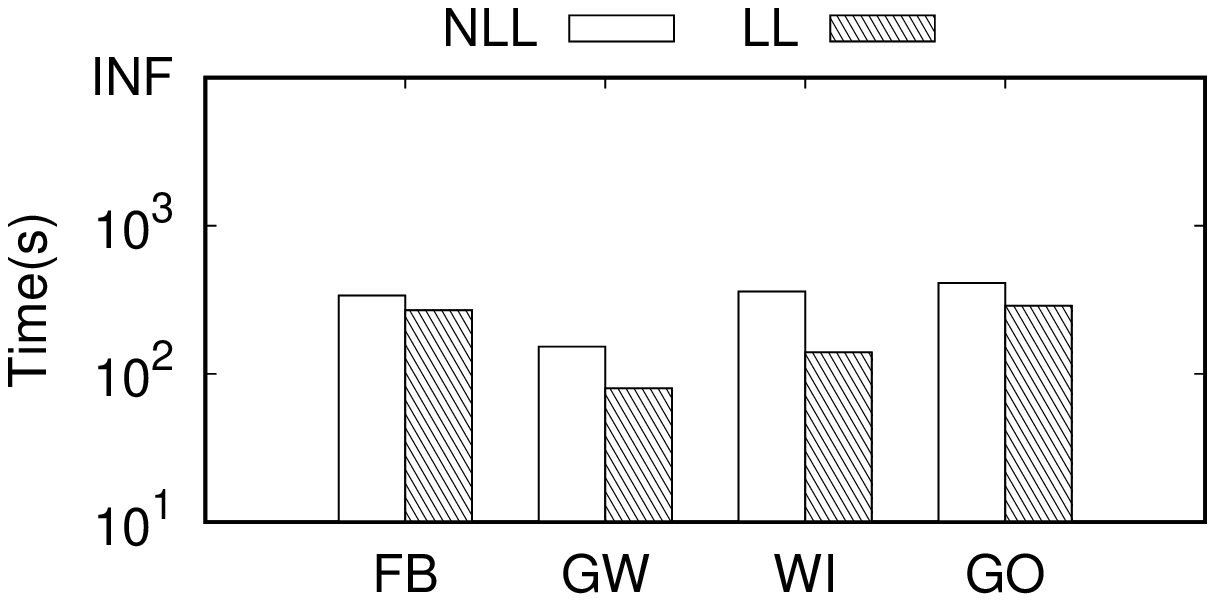}
    %\caption{Indexing Time When Tuning the $\#$ of Threads.}
    \label{fig:landmark}
     }
     \subfigure[\small{Schedule plan}]{
    \includegraphics[width=0.46\linewidth]{./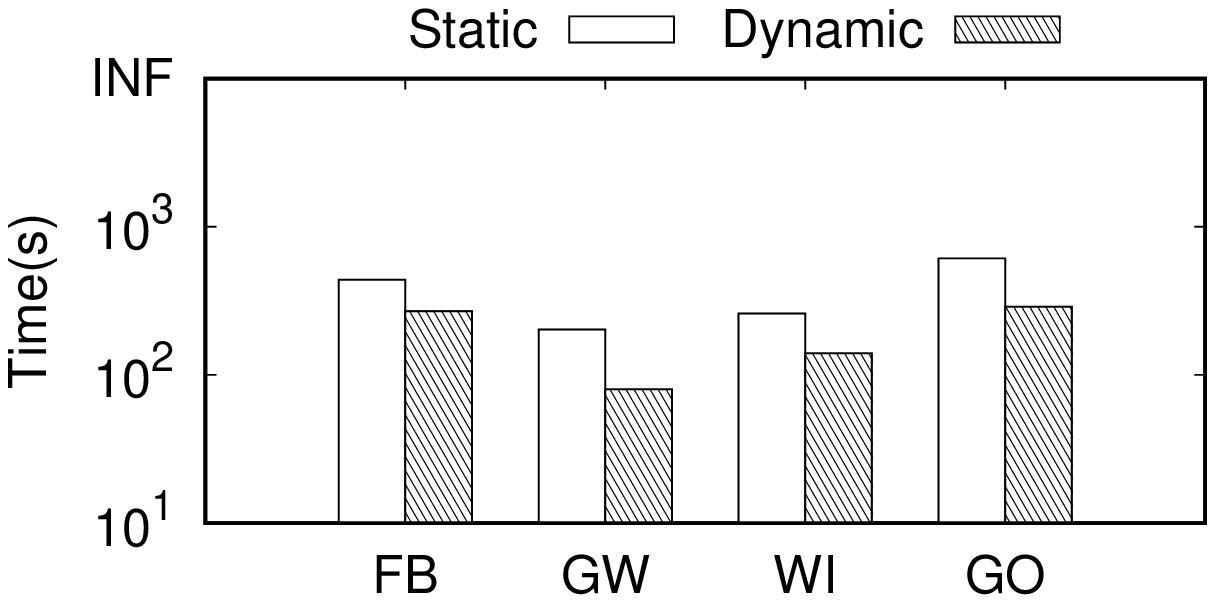}
    %\caption{Query Time When Tuning the $\#$ of Threads.}
    \label{fig:schedule}
     }
   
     \subfigure[\small{Node order}]{
    \includegraphics[width=0.96\linewidth]{./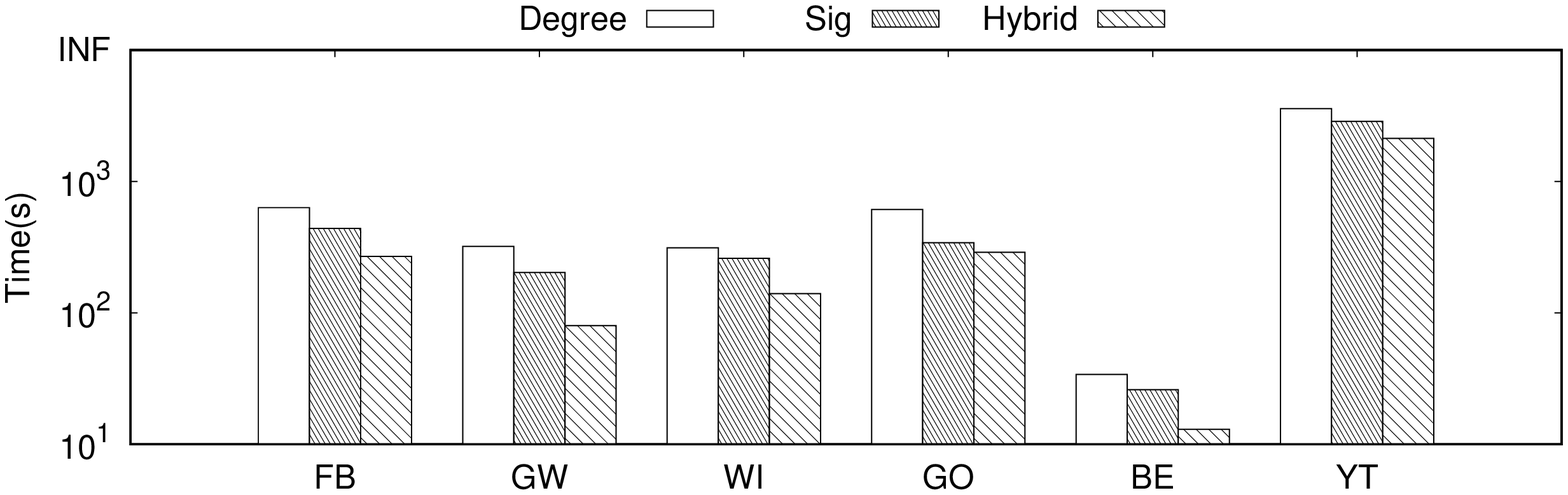}
    %\caption{Query Time When Tuning the $\#$ of Threads.}
    \label{fig:nodeorder}
     }
%	\vspace{-3mm}
\caption{Ablation analysis of different techniques with $20$ threads.}
%	\vspace{-3mm}
%\vspace{-0.3cm}
\label{fig:Ablation}
\end{figure}

\begin{figure*}[htbp]
	\vspace{-2mm}
	\newskip\subfigtoppskip \subfigtopskip = -0.1cm
	\newskip\subfigcapskip \subfigcapskip = -0.1cm
%     	\begin{minipage}[b]{\linewidth}
%		\centering
		\includegraphics[width=0.8\linewidth]{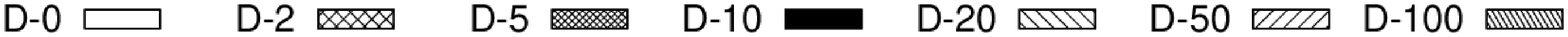}
%	\end{minipage}	
	\centering
     \subfigure[\small{Index Size(MB)}]{
    \includegraphics[width=0.30\linewidth]{./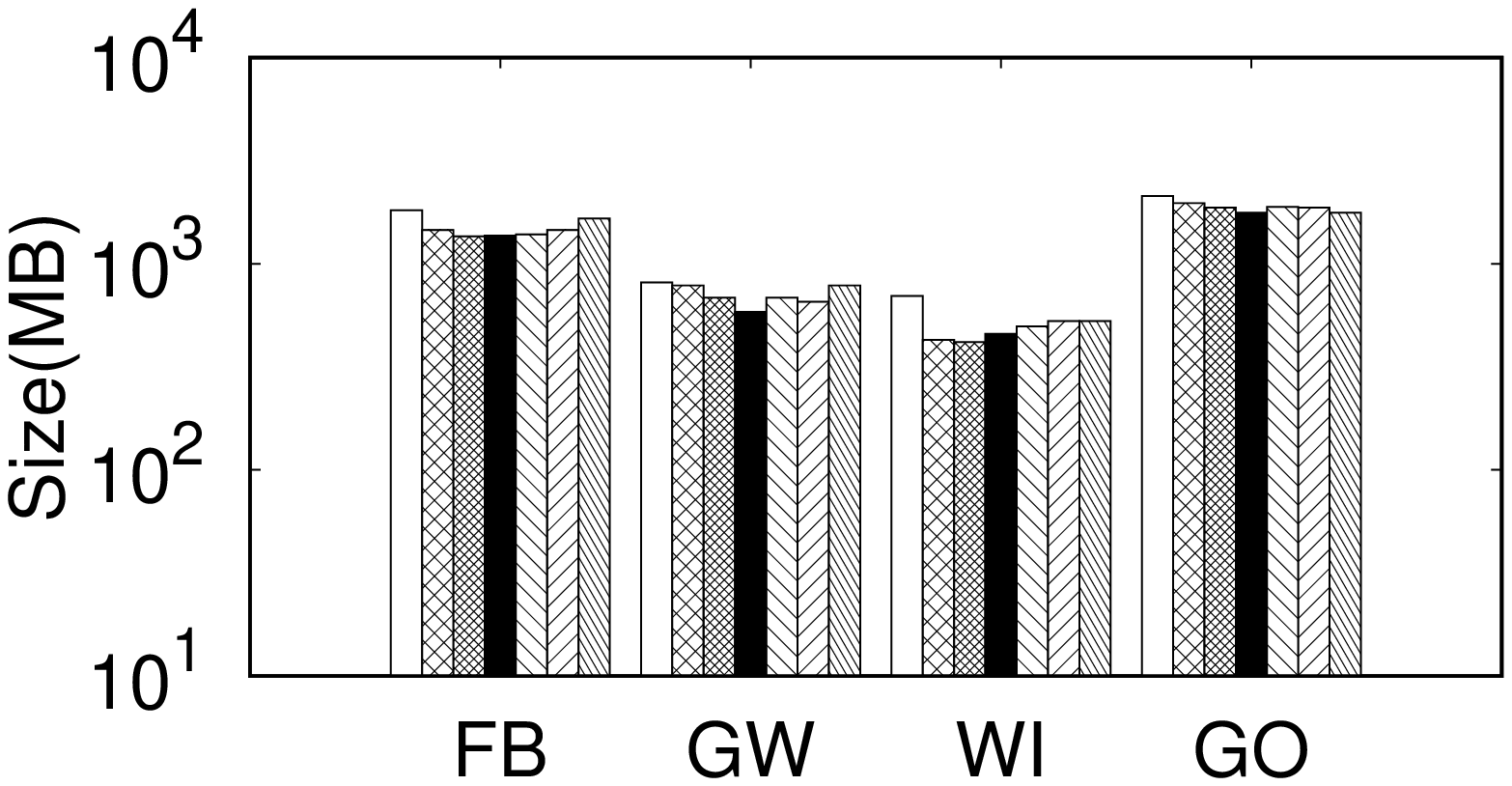}
    %\caption{Indexing Time When Tuning the $\#$ of Threads.}
    \label{fig:fb}
     }
     \subfigure[\small{Index Time(s)}]{
    \includegraphics[width=0.30\linewidth]{./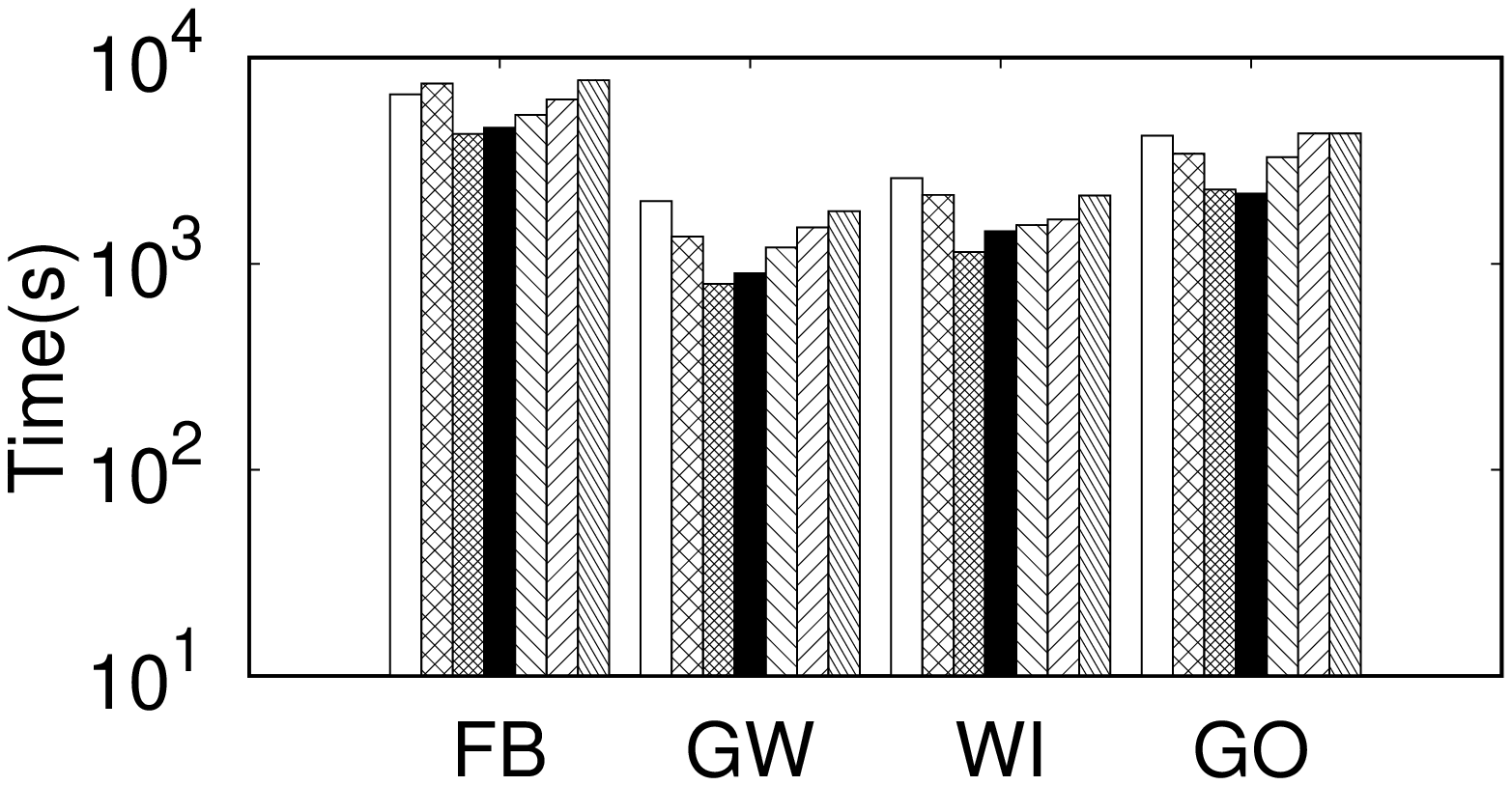}
    %\caption{Query Time When Tuning the $\#$ of Threads.}
    \label{fig:go}
     }
    \subfigure[\small{Query Time(us)}]{
    \includegraphics[width=0.30\linewidth]{./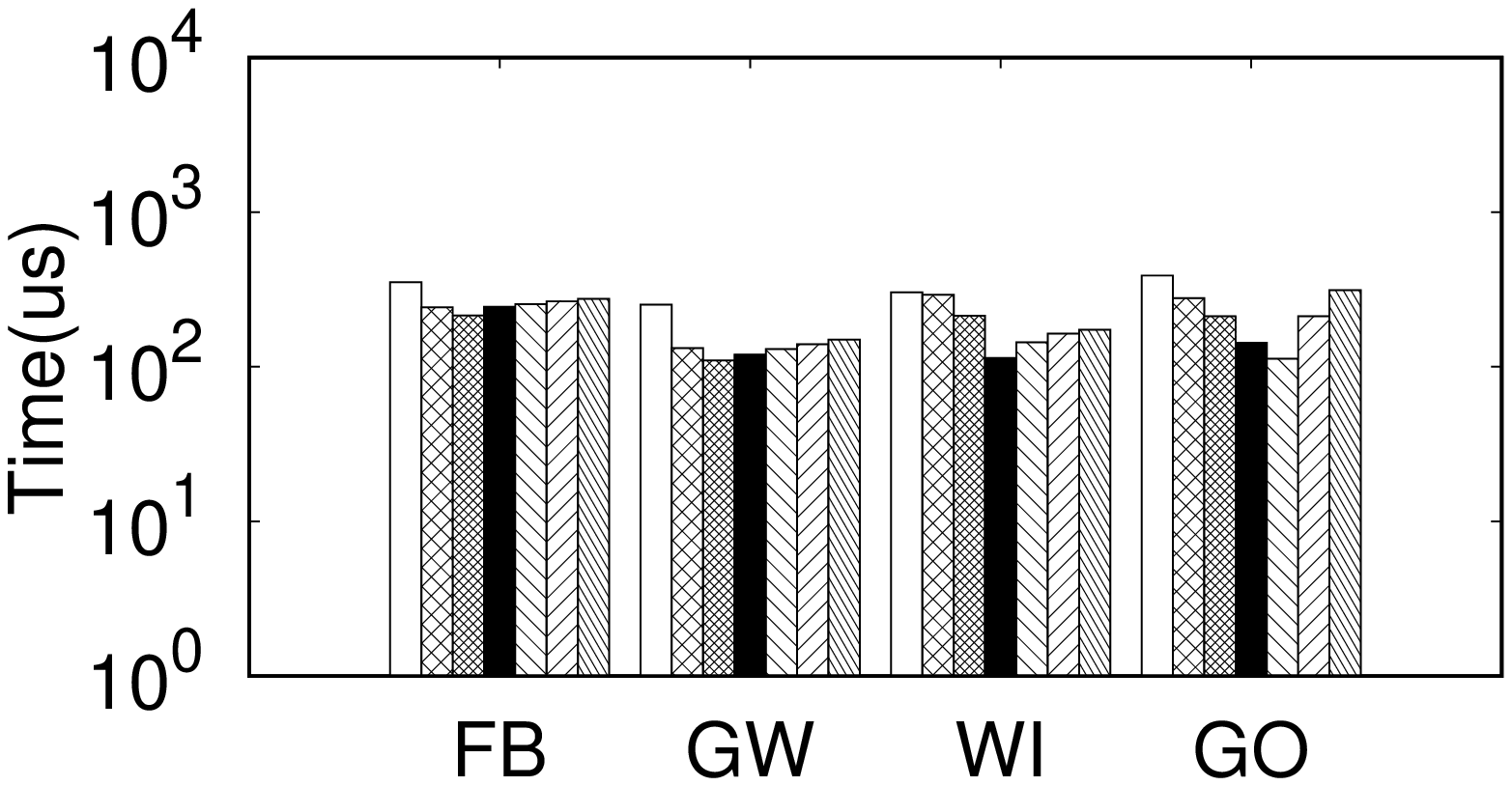}
    %\caption{Indexing Time When Tuning the $\#$ of Threads.}
    \label{fig:gw}
     }
%	\vspace{-3mm}
\caption{The effect of threshold $\delta$.}
%	\vspace{-3mm}
%\vspace{-0.3cm}
\label{fig:effect_delta}
\end{figure*}

  \begin{figure}[htb]
	\vspace{-2mm}
	\newskip\subfigtoppskip \subfigtopskip = -0.1cm
	\newskip\subfigcapskip \subfigcapskip = -0.1cm
%     	\begin{minipage}[b]{\linewidth}
%		\centering
%		\includegraphics[width=0.5\linewidth]{figure/main_graph_keys.eps}%
%	\end{minipage}	
	\centering
   
     \subfigure[\small{Node order}]{
    \includegraphics[width=0.96\linewidth]{./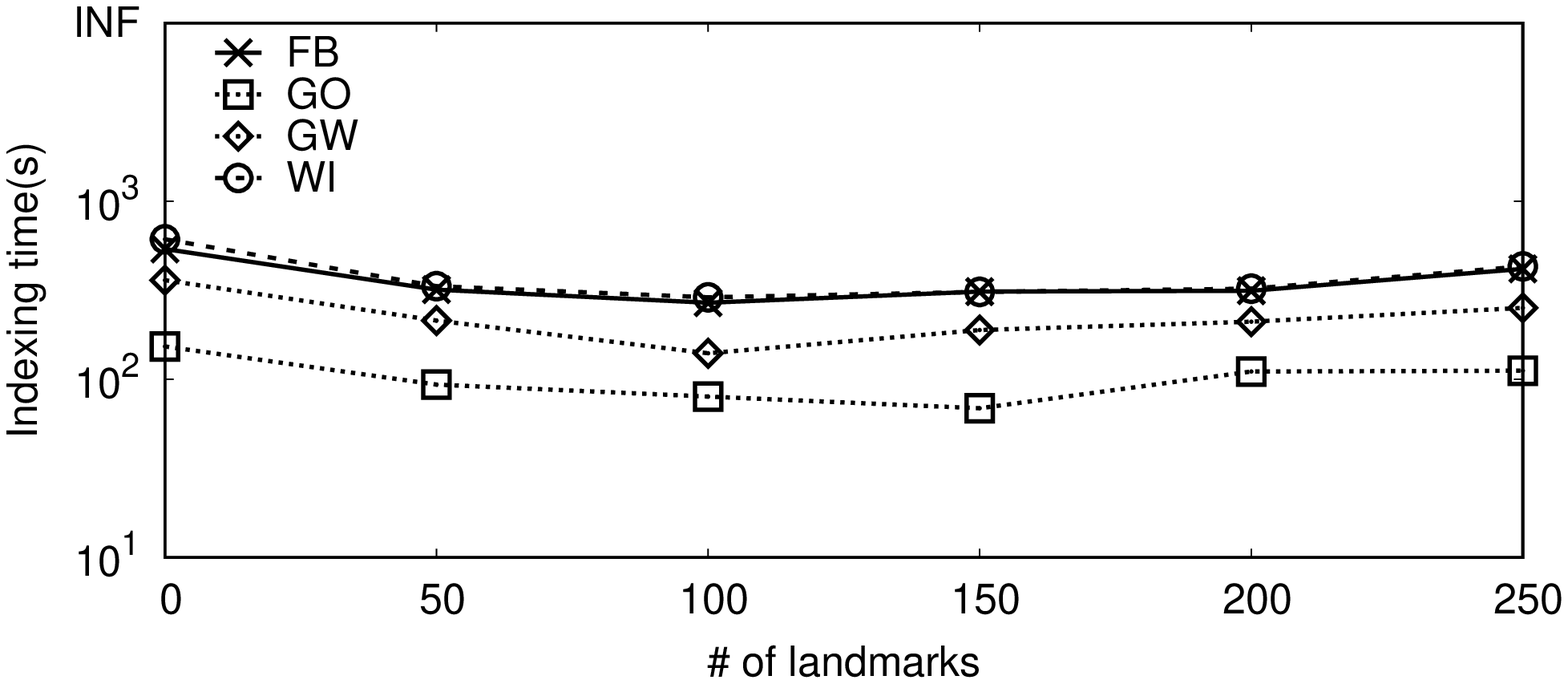}
    %\caption{Query Time When Tuning the $\#$ of Threads.}
    %\label{fig:no}
     }
%	\vspace{-3mm}
\caption{The effect of $\#$ of landmarks.}
%	\vspace{-3mm}
%\vspace{-0.3cm}
\label{fig:effect_landmarks}
\end{figure}

  \begin{figure}[htb]
	\vspace{-2mm}
	\newskip\subfigtoppskip \subfigtopskip = -0.1cm
	\newskip\subfigcapskip \subfigcapskip = -0.1cm
%     	\begin{minipage}[b]{\linewidth}
%		\centering
%		\includegraphics[width=0.5\linewidth]{figure/main_graph_keys.eps}%
%	\end{minipage}	
	\centering
   
    {
    \includegraphics[width=0.96\linewidth]{./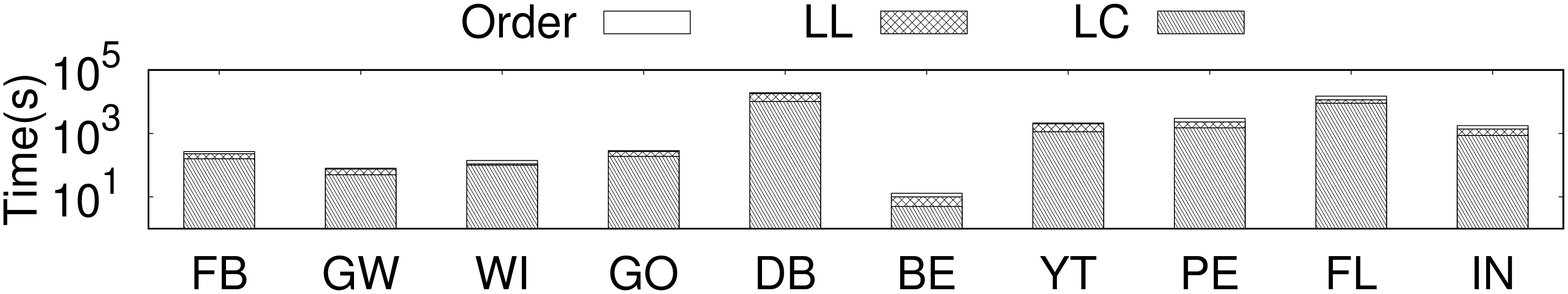}
    %\caption{Query Time When Tuning the $\#$ of Threads.}
    %\label{fig:break}
     }
%	\vspace{-3mm}
\caption{The time cost of different part during index construction. Order indicates the time cost for node ordering, while LL and LC denote the Landmark labeling and Label Construction, respectively.}
%	\vspace{-3mm}
%\vspace{-0.3cm}
\label{fig:breakDown}
\end{figure}

This section evaluates the effectiveness and efficiency of the proposed techniques on comprehensive experiments.

\subsection{Experimental Settings}
\label{sect:settings}

\vspace{1mm}
\noindent\textbf{Algorithms}.~We \rrev{compared} \rrev{the} proposed algorithms with baseline solutions.
\begin{itemize}
    \item \text{\base}.~The state-of-the-art shortest path counting algorithm proposed in~\cite{zhang2020hub}.
    \item \text{\our}.~Our parallel shortest path counting algorithm in a single thread.
    \item \text{\ourp}.~Our parallel shortest path counting algorithm in $20$ threads.
\end{itemize}

\vspace{1mm}
\noindent\textbf{Datasets}~Table~\ref{tb:datasets} shows the important statistics of real graphs used in the experiments. 10 publicly available datasets are used. The largest dataset Indochina is a web graph from
LAW\footnote{\url{http://law.di.unimi.it}}. The remaining 9 graphs, downloaded from KONECT\cite{kunegis2013konect}\footnote{\url{http://konect.uni-koblenz.de}} and SNAP~\cite{leskovec2011stanford}\footnote{\url{https://snap.stanford.edu}}, include an interaction network (WikiConflict), a coauthorship network (DBLP), a location-based social network (Gowalla), 2 web networks (Berkstan and Google), and 4 social networks (Facebook, Youtube, Petster, and Flickr). All of the graphs are unweighted. Directed graphs were converted to undirected ones in our testings. For query performance evaluation, 10,000 random queries were employed, and the average time is reported.

%try to finish the code
 
\vspace{1mm}
\noindent\textbf{Settings}~~In experiments, all programs were implemented in standard c++11 and compiled with g++4.8.5.
\\
All experiments were performed on a machine with 20X Intel Xeon 2.3GHz and 385GB main memory running Linux(Red Hat Linux 7.3 64-bit). The number of landmarks is set to $100$ by default.
%update the average degree
\begin{table}[htpb]
  \centering
     \caption{Statistics of Datasets.}% K indicates $10^3$. M indicates $10^6$.}
%\vspace{-3mm}
\label{tb:datasets}
%\resizebox{0.48\textwidth}{!}{
%\resizebox{0.48\textwidth}{34mm}{
    \begin{tabular}{lrrrr}%{|c|c|ccc|}
      \hline
      % after \\: \hline or \cline{col1-col2} \cline{col3-col4} ...as-caida20071105
      \textbf{Name}	&\textbf{Dataset} 	& \textbf{$\lvert V \rvert$}	 & \textbf{$\lvert E \rvert$} 	& \textbf{$d_{avg}$} \\ \hline %\hline %& \textbf{$d_{max}$}
	FB		&  Facebook		&	63,731	 & 	817,035	& 	25.6			   \\ %\hline	
	GW		&  Gowalla &	196,591	 & 	950,327	& 	9.7			   \\ %\hline
	WI		&  WikiConflict	&	118,100	 & 	2,027,871	& 	34.3			   \\ %\hline
	GO		& Google 		&	875,713	 & 	4,322,051	& 	9.9			   \\ %\hline
	%EPIN	& soc-Epinions1 		&	75K	 & 	508K	& 	3.2			   \\ %\hline
	%SEP		& soc-Epinions	&	76K	 & 	508K	& 	13.4			   \\ %\hline
	DB		& DBLP	&	1,314,050	 & 	5,326,414	& 	8.1			   \\ %\hline
	%DAU		& dblp-author 	&	317K	 & 	1.05M	& 	6.6			   \\ %\hline

	BE		& Berkstan 	&	685,230	 & 	6,649,470	& 	19.4			   \\ %\hline

	YT		& Youtube 		&	3,223,589	 & 	9,375,374	& 	5.8			   \\ %\hline
	PE		& Pester	&	623,766	 & 	15,695,166	& 	50.3			   \\ %\hline
	FL		& Flickr	&	2,302,925	 & 	22,838,276	& 	19.8			   \\ %\hline
	IN		& Indochina 	&	7,414,866	 & 150,984,819	& 	40.7			   \\ \hline
%	TW		& Twitter(WWW) 	&	41,414,866	 & 1,468,365,182	& 	7.1			   \\ \hline

       \end{tabular}
%}
%\vspace{-2mm}

\end{table}

\noindent \textbf{Exp 1: Indexing Time}.~This experiment evaluates the indexing time for three algorithms, \base, \our, and \ourp. \rrev{The ordering time is also counted in the indexing time.} What stands out in Figure~\ref{fig:runtime} is that \our \ could beat \base \ in the $7$ of $10$ datasets for single-core, except $GW$, $GO$, and $BE$. The \our \ could construct an index about 27$\%$ faster than \base \ in YT, and about $18\%$ faster on average. Moreover, \our \ could naturally be paralleled since the label entries in each round are divided into independent sets, while \base \  need to obey label dependency due to the node order rank. As for the multiple cores, the speedup of our \ourp \ could achieve nearly linear speedup with the growth of the number of cores. It is also apparent from this table that only \ourp \ could construct the index for billion-scale, which is urgently demanded in real-world applications. In the tested $10$ datasets, \ourp could achieve at least $12$ speedups when using $20$ threads compared with the single thread.  

\noindent \textbf{Exp 2: Index Size}.~Figure~\ref{fig:indexsize} shows the results on index size. What is striking about the table is that \our \  and \ourp \  return the same index size. The reason behind this phenomenon is that the dependencies are eliminated between each thread. Thus, the final index would be the same with any number of threads. As for the \base, the index size is similar to \our \ and \ourp. The reason behind this result is that the parallel paradigm does not affect index size. What is interesting about the data in this figure is that \our \ and \ourp could achieve the same index size, which indicates that there is no dependency on the index construction in each iteration. Specifically, in each iteration, the execution order of each vertex in \pull or \push does not affect the final result of our index. Figure~\ref{fig:indexsize} illustrates that \our \ and \ourp \ could construct a smaller index size in the datasets of $FB$, $GW$, $YT$, $PE$, and $FL$.

From the shortest path cover perspective, one of the most important factors for the index size is the node order of the original graph. Thus, it would compare different node orders in the following experiments.

\noindent \textbf{Exp 3: Query Time}.~Exp 3 evaluates the average query time taken by \base, \our, and \ourp \ with 100,000 random queries for each dataset. Figure~$7$ illustrates the query time of \base, \our, and \ourp. What is striking about the figures is that the query time of \base \ and \our \ is similar. Both of them could answer queries in about $100$ microseconds. Nevertheless, with the parallel techniques in \ourp, \ourp \ could achieve a nearly linear speedup when compared with \base \ and \our. \rrev{The main reason is that the query is very efficient and there is no significant bottleneck in the query process. Thus, a divide and conquer strategy on the query workload could achieve a linear speedup.}
% Query Time. Figure 6c shows the average query time taken by HP-SPCS, HP-SPC+S and $HP-SPC^*S$ over 1,000,000 random queries. As shown in the figure, the query time of HP-SPC+S is comparable to that of HP-SPCS. As for $HP-SPC^*S$, while it is able to further reduce the index size, it requires much more time to deal with a query. Specifically, the query time of $HP-SPC^*S$ is on average about 2.8 times of that of $HP-SPC^S$ on the 10 graphs tested. This is expected since $HP-SPC^*S$ needs to aggregate the labels of the query vertices in the course of computation. It would like to emphasize that although more time is needed in a query evaluation, the query time of $HP-SPC^*S$ remains of hundreds of microseconds. Therefore, it is still several orders of magnitude faster than breadth-first search in query evaluation (In Table 3, it shows for each graph the average BFS time over 10,000 random queries). 

% $HP-SPC^*S$ Versus $HP-SPC^D$. For graphs GO, BE and IN, on which HP-SPC+S performs much better than HP-SPC+D (see Exp-1), $HP-SPC^*S$ is also much better than $HP-SPC^D$. As for the other 7 graphs, $HP-SPC^D$ requires less indexing time and index space at the cost of slightly worse query time overall.

\noindent \textbf{Exp 4: Indexing Speedup on Multi-Cores}.~The speedup of the index time of an approach on $x$ cores is calculated by using the index time of the approach with $1$ core dividing that of $x$ cores.

% \begin{equation}
% \label{equa:speedup}
% speedup = \frac{the \ index \  time \ of \ the \ approach \ with \ 1 \ core}{the \ index \ time \ of \ the \ approach \ with \ x \ cores}
% \end{equation}
% According to Equation~\ref{equa:speedup}
Thus, when the core number is $1$, the speedup is constantly $1$; when an approach fails in indexing on $1$ core within the time limit, its speedup cannot be derived. This experiment evaluates the scalability of \ourp \ by varying the number of threads on four datasets, i.e., FB, GO, GW, and WI. It is illustrated in Figure~\ref{fig:speedup_index_time} that \ourp \ could achieve nearly linear scalability with the growth of threads. When the number of threads is $20$, \ourp \ could achieve $16.7$, $11.8$, $11.9$, and $15.4$ speedups for FB, GO, GW, and WI, respectively. It is shown that the scalability of \ourp \ is better in FB and WI than that of GO and GW. According to the statistics in~\ref{tb:datasets}, FB and GW have a high average degree, i.e., 25.6 and 34.3 respectively, while GW and WI have a lower average degree, i.e., 9.7 and 9.9, respectively.
\eat{The reason behind this phenomenon is that our parallel algorithm has some extra cost, e.g., initialization, for each newly added thread. For those datasets with a low average degree, the extra cost could affect the overall performance, while in the high average-degree datasets, the extra cost could only occupy a very small percentage of the total overhead. Thus, its scalability is much better than that in the low average-degree datasets. The speedup of query time over these four datasets is also evaluated.} What stands out in Figure~\ref{fig:speedup_query_time} is that the speedup of query time could achieve nearly linear scalability with the growth of threads. A similar trend could be observed in terms of query time's scalability.

\noindent \textbf{Exp 5: Ablation Analysis}.~Exp 5 analyzes the influence of separate techniques for the shortest path counting problem in terms of scalability. It evaluates the proposed techniques, i.e., landmark labeling, schedule plan, and node order, under $20$ threads. Since these techniques do not have many effects on the indexing size and query time, it only compares their indexing time. Figure~\ref{fig:Ablation} includes three sub-figures, i.e., Figure~\ref{fig:landmark},~\ref{fig:schedule}, and~\ref{fig:nodeorder}. In Figure~\ref{fig:landmark}, \textit{LL} denotes landmark labeling, while \textit{NLL} indicates the index construction without landmark labeling. What stands out in this figure is that landmark labeling could achieve about a little faster than that without landmark labeling. Figure~\ref{fig:schedule} illustrates that our cost function-based schedule plan could achieve about somewhat faster indexing time than that of the static schedule plan. What is striking in Figure~\ref{fig:nodeorder} indicates the hybrid node order could be the fastest among these three node orders. 

\noindent\rrev{\textbf{Exp 6: The effect of $\delta$.}~Figure~\ref{fig:effect_delta} shows the effect of $\delta$ in terms of index time, index size, and query time. What stands out in this figure is that when the $\delta$ increases, the index time, index size, and query time decrease first and then increase. The reason would be that the tree decomposition order would be suitable for the vertices with small order. Thus, the $\delta$ is set to $5$ from the empirical study.}

\noindent\rrev{\textbf{Exp 7: The effect of $\#$ of landmarks.}~Figure~\ref{fig:effect_landmarks} illustrates the effect of $\#$ of landmarks. Since the landmarks do not affect the index size and query time, we only compare the index time. What is striking in this figure is that when the number of landmarks increases, the index time decreases first and then increases. The reason would be that there is an extra cost if landmark-based filtering returns a $false$ result. When the number of landmarks increases, the possibility of returning $false$ may increase.}

\noindent\rrev{\textbf{Exp 8: Break Down the Indexing Time.}~Figure~\ref{fig:breakDown} illustrates the separate time cost for the node ordering, landmark labeling (LL), and label construction (LC). What stands out in the figure is that the LC dominates all the other two phases, which is the most time-consuming part. Although the LL and Order do not cost too much time, their results have an important impact on the LC phase.}
\balance
\section{Related Works}
\label{sect:related}
In this section, some important related works are surveyed as follows.

\eat{\noindent \underline{\textit{Graph Search for Distance Queries}}.~Both breadth-first search and Dijkstra's algorithm are classic algorithms for shortest path problems. Instead of Dijkstra's algorithm, the ALT algorithm~\cite{goldberg2005computing} employs A* search with a \textit{landmark}-based heuristic to speed up query processing. The notion of \textit{vertex reach} is proposed to reduce the search space for Dijkstra's algorithm in~\cite{gutman2004reach}. In the approaches that are based on \textit{arc}-flag~\cite{hilger2009fast}, a graph is partitioned into k regions and each arc $(u,v)$ is associated with a $k$-bit flag of which the i-th bit indicates if there is a shortest path from $u$ to the i-th region via $(u,v)$. Based on the arc-flags, the search space of Dijkstra's algorithm can also be greatly reduced. The notation of \textit{highway hierarchy} (HH)~\cite{sanders2005highway} is designed to capture the natural hierarchy of road networks so that queries can be answered by searching the sparse high levels of HH, reducing the search space. 

Geisberger et. al.~\cite{geisberger2008contraction} introduced \textit{contraction hierarchy} (CH), in which, different from HH, each level consists of only one vertex. Its efficiency relies heavily on the noiton of \textit{shortcut}, which is to preserve the distance between vertices after less important vertices are removed. In transit node routing~\cite{bast2006transit}, a set of T of transit nodes is selected and each vertex $v$ is associated with a small subset $A(v)$ of T via which $v$ can reach any distant destination. If a query $(s, t)$ is distant, the distance can be answered by inspecting the distances from $s$ (resp. $t$) to $A(s)$ (resp. $A(t)$) and the distances between vertices in $A(s)$ and $A(t)$; otherwise, the query is answered by CH. The techniques mentioned above can be combined, leading to more efficient algorithms. For example, the REAL algorithm~\cite{goldberg2006reach} is obtained by combining \textit{reach} and ALT, and the SHARC algorithm~\cite{bauer2010sharc} is based on \textit{shortcut} and \textit{arc-flag}.

\noindent \underline{\textit{Hub Labeling for Distance Queries}}.~Another important class of algorithms for distance evaluation is hub labeling~\cite{cohen2003reachability}. In this class, a label $L(v)$ is computed for each vertex $v$ such that the distance between two vertices $s$ and $t$ can be obtained by inspecting $L(s)$ and $L(t)$ only, without searching the graph. In general, it is NP-hard to construct a labeling with the minimum size~\cite{cohen2003reachability}. In~\cite{abraham2011hub,abraham2012hierarchical}, efficient hub labelings for road networks are discussed. A labeling scheme that instead uses paths as hubs is presented in~\cite{akiba2014fast}. 

In~\cite{ouyang2018hierarchy}, under the assumption of small treewidth and bounded tree height, a scheme combining both hub labeling and hierarchy is proposed for road networks. For real graphs that are scale-free, pruned landmark labeling (PLL)~\cite{akiba2013fast} is the state-of-the-art and its many extensions have been devised. For example, an external algorithm generating the same set of labels is proposed in~\cite{jiang2014hop}; a parallel algorithm is devised in~\cite{li2019scaling}; and~\cite{akiba2014dynamic} shows an algorithm to update the labels when new edges are inserted into the graph. In~\cite{li2017experimental}, an experimental study on hub labeling for distance queries is presented.}

\noindent \underline{\textit{Counting}}.~Counting the occurrences of certain structs is also fundamental in graph analytics~\cite{jain2017fast,peng10,peng2,peng11}.~\cite{pinar2017escape} present randomized algorithms with provable guarantee to count cliques and 5-vertex subgraphs in a graph, respectively. An algorithm that counts triangles in $O(m^{1.41})$ time is shown in~\cite{alon1997finding}. There are also many works on counting paths and cycles in the literature. 

The problem of exactly counting paths and cycles of length $l$, parameterized by $l$, is \#W[1]-complete under parameterized Turing reductions~\cite{flum2004parameterized}. In addition, given vertices $s$ and $t$, the problem of counting the \# of simple paths between $s$ and $t$ is \#P-complete~\cite{roberts2007estimating,valiant1979complexity}. ~\cite{bezakova2018counting} and~\cite{ren2018shortest} also study the problem of counting shortest paths for two vertices, but unlike this work, they focus on planar graphs and probabilistic networks, respectively.

\eat{\noindent \underline{\textit{Enumerating Patterns}}.~Enumerating the number of specific patterns is another method to produce the count. However, it is supposed to take much more time than direct counting \cite{giscard19}. For instance, when counting shortest cycles, one of the primary reasons is that the shortest cycle length is unknown in advance, and obtaining it requires the employment of a BFS-like method whose running time is already longer than the query time of hub labeling. Another reason is that enumeration requires finding all the vertices along with the cycles, which is unnecessary for the counting problem. With the help of a hot-point based index, \cite{qiu18} can output the newly-generated constraint cycles upon the arrival of new edges. \cite{peng19} is shown to have better time performance than \cite{qiu18} on enumerating constraint paths and cycles.}

\noindent \underline{\textit{Dynamic Maintenance for 2-hop Labeling}}.~To adapt to the dynamic update of the network, some works~\cite{akiba14,d2019fully,qin17} proposed dynamic algorithms to deal with edge insertion and deletion. For the edge insertion, a partial BFS for each affected hub is started from one of the inserted-edge endpoints and creates a label when finding the tentative distance is shorter than the query answer from the previous index. Also, hub labeling is used in the related constrained path-based problems, e.g., label constrained~\cite{peng1,peng9,peng13}.
%For the edge deletion, the state-of-the-art solution is to find the affected hubs followed by removing out-of-date labels, then recovering labels. The time cost sometimes is acceptable compared with reconstructing the whole index. But it is much slower than that of incremental update. Thus, it is vital to investigate how to efficiently update the 2-hop labeling index for the different variants of problems. 
\balance
\section{Future Work and Conclusion}
\label{sect:con}

\noindent\rrev{\textbf{Future Work.}~Experiments demonstrate that for graphs such as DB and FL, our techniques require much more indexing time and index space than for other graphs of comparable size. Unfortunately, it is currently unclear what properties of these graphs contribute to this inefficiency. It would be interesting to investigate it in our future work.}
%There appears to be an efficiently computable order that handles these graphs effectively? Or is the enormous resource usage inevitable under the existing labeling approach, even with optimal ordering? Both questions are likewise open-ended.}

\noindent \textbf{Conclusion}.~\eat{In graph analytics, the concept of shortest path is fundamental. While numerous works have been devoted to developing efficient distance oracles for computing the shortest path distance between any vertices and $t$, we examine the problem of efficiently counting the number of shortest paths between $s$ and $t$ in light of its applications in tasks such as betweenness analysis. Counting the shortest paths is a key issue in the graph database field. Numerous studies have been conducted in recent years to attempt to resolve such problems. Nonetheless, the existing approach confronts a significant obstacle in the form of large graphs, limiting its application possibilities.}
%To solve this issue, we present a parallel shortest path counting method that is capable of processing billion-scale queries in real time. Our empirical evaluations verifies the efficiency and effectiveness.
We study the problem of counting the $\#$ of shortest paths between two vertices $s$ and $t$ in the context of parallel. To address the scalability issue of existing work, we propose a parallel algorithm to speedup this problem. We also investigate the proper vertex ordering for the parallel index construction. Our comprehensive experimental study verifies the effectiveness and efficiency of our algorithms.  

\section*{Acknowledgment}
This work is supported by Hong Kong RGC GRF grant (No. 14203618, No. 14202919,  No. 14205520, and No. 14217322), RGC CRF grant (No. C4158-20G), Hong Kong ITC ITF grant (No. MRP/071/20X), and NSFC grant (No. U1936205). 
%\clearpage

%\newpage
\bibliographystyle{ieeetr}
{
\small
\bibliography{ref}
}

\end{document}